\documentclass[twocolumn]{autart}
\PassOptionsToPackage{dvipsnames}{xcolor}
\makeatletter
\renewcommand\@seccntformat[1]{\csname lb@#1\endcsname}
\makeatother
\usepackage{graphicx}
\usepackage{amsmath,amssymb}
\usepackage{booktabs}
\usepackage{tikz}
\usetikzlibrary{decorations.pathreplacing,fit,calc}
\pgfdeclarelayer{background}
\pgfsetlayers{background,main}
\usepackage{colortbl}
\usepackage{algorithmic}
\usepackage[normalem]{ulem}
\makeatletter\let\AND\@undefined
\def\algorithm{\medskip\par\noindent\minipage{\linewidth}%
  \let\@makecaption\@makealgocaption\def\@captype{algorithm}\small}

\makeatother
\usepackage[numbers,sort&compress]{natbib}
\usepackage[colorlinks=true, linkcolor=blue, citecolor=blue, hypertexnames=false]{hyperref}

\newcommand{\IP}{\mathit{IP}}
\newcommand{\SL}{\mathit{SL}}
\newcommand{\Nc}{\mathcal{N}}
\newcommand{\Ec}{\mathcal{E}}

\newcommand{\Sc}{\mathcal{S}}
\newcommand{\Cc}{\mathcal{C}}
\newcommand{\im}{\text{\normalfont\j}}
\newcommand{\pd}[2]{\frac{\partial #1}{\partial #2}}

\newcommand{\Ppaths}[2]{\mathcal{P}_{#1 \to #2}}
\newcommand{\Tp}[1]{\mathcal{T}_{#1}}
\newcommand{\pe}[1]{P^{#1}_{e}}

\newcommand{\squeeze}[1]{%
  \resizebox{\ifdim\width>\linewidth \linewidth\else\width\fi}{!}{%
    \ensuremath{\displaystyle #1}}}

\usepackage{xcolor}
\usepackage[deletedmarkup=sout,authormarkup=superscript]{changes}
\definechangesauthor[name={Lauren}, color=NavyBlue]{LB}
\definechangesauthor[name={Dominic}, color=RedViolet]{DLM}
\definechangesauthor[name={Adam}, color=OliveGreen]{AC}

\AtBeginDocument{\setlength{\abovedisplayskip}{5pt plus 1pt minus 2pt}\setlength{\belowdisplayskip}{5pt plus 1pt minus 2pt}\setlength{\abovedisplayshortskip}{2pt}\setlength{\belowdisplayshortskip}{3pt}}
\begin{document}

\begin{frontmatter}

\title{Network Design against the Bullwhip Effect in Complex Supply
Chains\thanksref{footnoteinfo}}

\thanks[footnoteinfo]{This work was supported by the Natural Sciences and Engineering Research Council of Canada Discovery Grants Program (Reference \#\,RGPIN-2023-03257) and the Wall Legacy Awards. Corresponding author: D. Liao-McPherson.}

\author[addr]{Lauren Bogo}\ead{lauren.bogo@ubc.ca}~and
\author[addr]{Dominic Liao-McPherson}\ead{dliaomcp@mech.ubc.ca}

\address[addr]{Mechanical Engineering, The University of British Columbia, Vancouver, Canada}

\begin{keyword}
Bullwhip effect; Supply Chain networks;
Network design; Flow allocation; $\mathcal{H}_\infty$ optimization.
\end{keyword}
\begin{abstract}
This paper studies the bullwhip effect, the amplification of demand
fluctuations into larger order fluctuations upstream, in supply chain networks where
each firm orders from several suppliers over routes with different lead times.
We show that on a directed acyclic network every response from demand to
orders is a sum over paths of products of nodal responses, so that the
stability and stability margins of the network are determined by those of its
individual nodes, and that a node which splits its orders across routes of
different lengths gains margin it can spend on a higher gain and a faster
response. We then minimize the worst-case amplification of the network by
choosing how each firm splits its orders among its suppliers, which route lead
times to shorten, and each firm's ordering gain, with gradients from one
adjoint pass per demand node and frequency. On a 14-city network the optimized design amplifies 20~dB
less than cost-minimal routing and 17~dB less than a design that shortens
the longest routes with the same budget, and is better connected than either.
Amplification is therefore a property of routing and delay, not only of the
ordering policy, and designing against it also makes the network more robust
to a lost route.
\end{abstract}

\end{frontmatter}

\section{Introduction}
\label{sec:intro}

In modern market economies, supply chains are a complex and dynamic interconnected network of suppliers, manufacturers, retailers, wholesalers, and consumers that transform raw materials into finished products and deliver these products to consumers. A common issue in supply chains is the \emph{bullwhip effect} where fluctuations in demand for a product are amplified upstream in the network \citep{lee1997distortion,wang2016progress}. These oscillations drive up costs and can lead to stock outs and/or wastage; removing them can improve product profitability by 10 to 30\% \citep{metters1997quantifying}.

We know the bullwhip effect is caused by the feedback policies (ordering policies) that firms use to manage their stocks \citep{lee1997distortion,sterman1989beergame} interacting with each other and with the transport and production delays \citep{dejonckheere2003control,chen2000quantifying} inherent to the system. These oscillations are present even in simple serial chains and the phenomenon becomes even more complex in more realistic supply networks where firms have multiple suppliers and customers coupled through a transport network with heterogeneous delays \citep{ouyang2010bullwhip,sipahi2009stability}.

Historically, the bulk of works on mitigating the bullwhip effect (reviewed
in \cite{sarimveis2008dynamic}) focused on modifying the ordering policy primarily in
serial chains using e.g., ideas from classical control
\cite{dejonckheere2003control,disney2002discrete,gaalman2006fullstate,hosoda2006altruistic},
predictive control \cite{fu2020cooperative}, and robust control
\cite{li2024mitigating,qiu2015stabilization}. There are works that analyze
the bullwhip effect in more complex supply networks through simulation
studies \cite{dominguez2015structure,ignaciuk2020quantifying} or by bounding its worst-case gain
\cite{ouyang2006characterization,ouyang2010bullwhip}, and several study how
to mitigate it by adjusting the ordering policies
\cite{wei2015exploring,zemzam2019stabilization} . None of these works address
network or adjusting delays; the network is taken as given.

Beyond the bullwhip effect, there are works that design the supply network
to optimize ``static'' performance metrics, e.g., operating costs
\cite{melo2009facility}, expected costs after a facility failure
\cite{snyder2005reliability}, and resilience to loss of nodes or edges
\cite{perera2017network,ghosh2006growing}. In the network-systems
literature the network is designed for dynamic performance: edge weights
and topology are chosen to minimize the $\mathcal{H}_2$ or
$\mathcal{H}_\infty$ norm of a consensus network
\cite{summers2015topology,ghaedsharaf2019performance,farhat2021hinf}. Those results assume symmetric consensus dynamics without per-node
feedback or heterogeneous delays which do not hold in the context of supply networks.

In this paper, we study how to mitigate the bullwhip effect in supply chains with heterogeneous delays and linear ordering policies with a network structure given by a directed acyclic graph. We model the network as a structured interconnection of inventory-ordering loops and quantify the bullwhip effect using the $\mathcal{H}_\infty$ norm of the network transfer function, building on results from consensus processes in network systems \citep{ghaedsharaf2019performance,farhat2021hinf,welikala2025consensus}. Our contributions are threefold:
\begin{enumerate}
  \item We show that the transfer function between demand and orders/inventory at any two nodes can be decomposed into a sum of serial supply chains along all paths connecting the nodes
  \item We illustrate how the bullwhip effect can be reduced by rerouting goods through the network and provide an frequency domain analysis of the effect
  \item We provide an algorithm for computing approximate gradients of the bullwhip metric and illustrate how it can be used to optimize complex networks by rerouting goods and adjusting delays using a 14-node example network.
\end{enumerate}

 The nearest works in the literature decompose a supply network into its
eigenmodes, so that amplification is resonance of a mode near its
natural frequency \cite{helbing2004physics}, with a single transport lag
shared by every route setting each mode's stability
\cite{sipahi2009stability}; delay-dependent margins have also been computed
for two-echelon chains \cite{hernandezsantos2026stability}. Those results
assume a single delay common to every route, which real networks do not
have, and attribute amplification to the network as a whole.
Darmawan et al.~\citep{darmawan2025scnd} design the network, including lead times, with
bullwhip costs in the objective, but compute the bullwhip effect for a
specific demand model and assign each node a single lead time even when it has several suppliers. Decomposing by path admits heterogeneous inbound delays at each node, where
mixing routes with different delays can cancel amplification, bounds it for
any demand through the $\mathcal{H}_\infty$ norm, and locates it in specific
routes of the topology.

The rest of the paper is organized as follows: Section~\ref{sec:setting} introduces the network model, Section~\ref{sec:analysis}
analyzes how the network structure sets amplification and stability
margins, Section~\ref{sec:design} optimizes that structure, and
Section~\ref{sec:conclusions} concludes and discusses future extensions.

\section{Problem Setting}
\label{sec:setting}

We model a supply network for a fungible good as a directed acyclic graph (DAG)
$\mathcal{G} = (\Nc, \Ec, \tau)$. The node set
$\Nc = \{v_i\}_{i=1}^{n}$ is the set of firms that consume, distribute, and
supply the good, and the edge set
\begin{equation}
  \Ec \subset \bigl\{ (u,v) : u, v \in \Nc,\ u \neq v \bigr\}
  \label{eq:edges}
\end{equation}
is the set of supply routes: edge $(i,j) \in \Ec$ runs from customer $i$
to supplier $j$, so $\Sc(i) = \{j : (i,j) \in \Ec\}$ and
$\Cc(i) = \{k : (k,i) \in \Ec\}$ are the supplier (successor) and customer (precursor) sets of node
$i$. The map $\tau : \Ec \to \mathbb{R}_{> 0}$ assigns each edge a transport delay $\tau_{ij} > 0$.
\emph{Sources} $\Sc_0 = \{j \in \Nc : \Sc(j) = \emptyset\}$ have no suppliers and place no orders; exogenous demand $v_i(t) \ge 0$ may enter at any ordering node. Every node but the sources places orders;
these are the $N = |\Nc \setminus \Sc_0|$ ordering nodes.

\begin{figure}
\centering
\resizebox{0.78\linewidth}{!}{%
\begin{tikzpicture}[>=latex, font=\small,
  nd/.style={circle, draw, minimum size=20pt, inner sep=1pt},
  goods/.style={->, thick},
  orders/.style={->, red, dashed}]
  \node[nd, fill=black!8] (D)  at (0,0)      {$D$};
  \node[nd, fill=black!8] (M1) at (2.9,0)    {$M_1$};
  \node[nd, fill=black!8] (M2) at (5.8,1.2)  {$M_2$};
  \node[nd, fill=black!8] (M3) at (5.8,-1.2) {$M_3$};
  \node[nd] (S) at (8.7,0) {$S$};
  \draw[->] (-1.5,0) -- node[above]{$v$} (D);
  \draw[goods] (M1) to[bend left=15] node[below=1pt]{$1,\ \tau=2$} (D);
  \draw[goods] (M2) to[bend right=15] node[above=1pt, sloped]{$w,\ \tau=2$} (M1);
  \draw[goods] (M3) to[bend left=15] node[below=1pt, sloped]{$1-w,\ \tau=7$} (M1);
  \draw[goods] (S)  to[bend right=15] node[above=1pt, sloped]{$1,\ \tau=2$} (M2);
  \draw[goods] (S)  to[bend left=15] node[below=1pt, sloped]{$1,\ \tau=3$} (M3);
  \draw[orders] (D)  to[bend left=15] (M1);
  \draw[orders] (M1) to[bend right=15] (M2);
  \draw[orders] (M1) to[bend left=15] (M3);
  \draw[orders] (M2) to[bend right=15] (S);
  \draw[orders] (M3) to[bend left=15] (S);
  \begin{scope}[shift={(0.2,-2.35)}]
    \draw[goods] (0,0) -- (0.8,0) node[right, black]{goods};
    \draw[orders] (3.0,0) -- (3.8,0) node[right, black]{orders};
  \end{scope}
\end{tikzpicture}}
\caption{Five-node supply network with per-edge flow weights and delays.
Solid arrows are goods, flowing from supplier to customer; dashed red
arrows are orders, flowing from customer to supplier. Weights and delays
label the route they belong to. Filled nodes place orders; exogenous demand enters at $D$; $S$ is a source.}
\label{fig:running}
\end{figure}

A node's demand is its customers' weighted orders plus exogenous
demand,
\begin{equation}
  d_j = \sum_{k \in \Cc(j)} w_{kj}\, o_k \;+\; v_j,
  \label{eq:coupling}
\end{equation}
with $v_j$ the exogenous part.  To satisfy its demand, a node must decide how to split its orders $o_i(t)$ among its suppliers. It does so using the weights 
$w : \Ec \to [0,1]$ which represent the fraction of node $i$'s orders routed along $j \in \Sc(i)$. The choice of these weights control how goods flow through the network and must satisfy
\begin{equation}
  \sum_{j \in \Sc(i)} w_{ij} = 1, \qquad w_{ij} \ge 0.
  \label{eq:weights}
\end{equation}

Each ordering node $i$ has an inventory position $\IP_i(t)$ (stock on hand, net of backlog) and a supply line $\SL_i(t)$ (stock ordered, not yet arrived). These goods arrive after their respective delays $\tau_{ij}$ and sum together to form the total incoming delivery (Figure~\ref{fig:loop_physical}). The combined dynamics for node $i$ are
\begin{subequations} \label{eq:multi}
\begin{align}
  \dot{\IP}_i(t) &=
    \textstyle\sum_{j \in \Sc(i)} w_{ij}\, o_i(t-\tau_{ij}) - d_i(t),
    \label{eq:inv_multi}\\
  \SL_i(t) &= \textstyle\sum_{j \in \Sc(i)} w_{ij}
    \int_{t-\tau_{ij}}^{t} o_i(m)\,\mathrm{d}m, \label{eq:sl_multi}\\
  o_i(t) &= d_i(t) - K_i\bigl(\IP_i(t) + \beta_i\,\SL_i(t)\bigr),
    \label{eq:policy_multi}\\
  d_i &= \sum_{k \in \Cc(i)} w_{ki}\, o_k \;+\; v_i.
\end{align}
\end{subequations}
The ordering policy \eqref{eq:policy_multi} is the commonly used order-up-to (OUT) rule \cite{dejonckheere2003control,disney2002discrete,hosoda2006altruistic} which applies proportional feedback to the stock error with local gain $K_i > 0$. The credit $\beta_i \in [0,1)$ determines how much of the supply line the node considers when ordering. At $\beta=0$ in-transit stock is ignored and the node will reorder stock that is already on its way. At $\beta=1$ the stock that is in-transit will be considered fully and no amplification will occur at any delay. This behaviour is not observed in practice however, behavioural studies suggest a typical value of $\beta\approx 0.34$ \cite{sterman1989beergame}, and any $\beta<1$ will amplify.

\begin{rem}[Steady state and setpoints]
\label{rem:ss}
Under constant demand $d_i(t) \equiv \bar d_i$, the equilibrium of \eqref{eq:multi} is
\begin{equation*}
o_i = \bar d_i, \quad
\SL_i = \bar d_i \textstyle\sum_{j \in \Sc(i)} w_{ij}\tau_{ij}, \quad
\IP_i = -\beta_i\,\SL_i.
\label{eq:ss}
\end{equation*}
The equilibrium can be negative because $\IP_i$ and $\SL_i$ are deviations
from setpoints $\IP^*_i$ and $\SL^*_i$. The model is linear in these
deviations, so the results below hold for demand perturbations small enough
that orders stay non-negative and no supplier or route runs out of stock or
capacity.
\end{rem}
\begin{figure}
\centering
\resizebox{0.9\linewidth}{!}{%
\begin{tikzpicture}[>=latex, font=\small,
  blk/.style={draw, minimum height=16pt, minimum width=40pt, inner sep=3pt},
  sm/.style={draw, circle, minimum size=11pt, inner sep=0pt}]
\node (din) at (-2.2,0.9) {$d_i$};
\node[sm] (so) at (-1.0,0.9) {};
\coordinate (br) at (-0.1,0.9);
\node[blk] (del1) at (1.6,1.5) {$w_{ij_1}e^{-s\tau_{ij_1}}$};
\node at (1.6,0.85) {$\vdots$};
\node[blk] (del2) at (1.6,0.2) {$w_{ij_m}e^{-s\tau_{ij_m}}$};
\node[sm] (ssum) at (3.3,0.9) {$\Sigma$};
\node[sm] (sip) at (4.7,0.9) {};
\node[blk] (int) at (6.2,0.9) {$\dfrac{1}{s}$};
\node[blk] (slb) at (2.0,-1.2) {$\sum_j w_{ij}\dfrac{1-e^{-s\tau_{ij}}}{s}$};
\node[blk] (bet) at (4.8,-1.2) {$\beta_i$};
\node[sm] (sy) at (7.6,-1.2) {};
\node[blk] (K) at (2.0,-2.6) {$K_i$};
\draw[->] (din) -- node[pos=0.78,above]{\scriptsize$+$} (so);
\draw[->] (so) -- node[above, pos=0.4]{$o_i$} (br) -- (0.4,0.9) |- (del1);
\draw[->] (0.4,0.9) |- (del2);
\draw[->] (del1) -| (ssum);
\draw[->] (del2) -| (ssum);
\draw[->] (ssum) -- node[above=2pt, pos=0.4]{\scriptsize arrivals} node[pos=0.9,above=1pt]{\scriptsize$+$} (sip);
\draw[->] (sip) -- (int);
\draw[->] (int.east) -- ++(0.9,0) node[above left]{$\IP_i$} coordinate (ipout);
\draw[->] (ipout) |- node[pos=0.96,above=1pt]{\scriptsize$+$} (sy);
\draw[->] (-2.2,2.3) node[left]{$d_i$} -| node[pos=0.97,left=1pt]{\scriptsize$-$} (sip);
\draw[->] (br) |- (slb);
\draw[->] (slb) -- node[above]{$\SL_i$} (bet);
\draw[->] (bet) -- node[pos=0.92,above=1pt]{\scriptsize$+$} (sy);
\draw[->] (sy) |- node[pos=0.75,above]{$y_i = \IP_i + \beta_i\,\SL_i$} (K);
\draw[->] (K) -| node[pos=0.97,right]{\scriptsize$-$} (so);
\end{tikzpicture}}
\caption{The ordering loop at node $i$, drawn from \eqref{eq:multi}:
orders split by the weights, transit each route's delay, and the
arrivals sum into the inventory balance; the supply line accumulates
orders placed but not yet arrived; the policy feeds back the credited
stock $y_i = \IP_i + \beta_i\,\SL_i$.}
\label{fig:loop_physical}
\end{figure}

Throughout this paper we assume the following.
\begin{itemize}
\item[\textbf{A1}] $\mathcal{G}$ is a directed acyclic graph (DAG) with $\tau_{ij} > 0$;
\item[\textbf{A2}] every ordering node follows the policy
\eqref{eq:multi} with $K_i > 0$,
 $\beta_i \in [0,1)$, and weights $w_{ij}$ satisfying
\eqref{eq:weights};
\item[\textbf{A3}] the nodes are indexed in a topological order, so that
$(c,j) \in \Ec$ implies $c < j$.
\end{itemize}

\begin{rem}[Discrete time]
\label{rem:dt} Discrete-time analogs of the main results follow readily. Sampling with integer delays replaces the integrator $1/s$ by $1/(z-1)$ and the delay $e^{-s\tau}$ by $z^{-\tau}$ and every result that follows carries over with the Laplace transform replaced by the $z$-transform. The largest difference is that optimizing over integer delays $z^{-\tau}$ becomes a mixed-integer problem.
\end{rem}

\section{Network analysis}
\label{sec:analysis}

We first reduce a single node to a feedback loop, then study how the nodes
interconnect.

\subsection{Nodal dynamics}
\label{sec:node_dyn}

The following shows that the nodal dynamics \eqref{eq:multi} reduce to a single feedback loop (Figure~\ref{fig:loop_tf}), from which we derive the transfer function of each node. 

\begin{thm}[Nodal responses]
\label{thm:node}
Suppose A2 holds. Then the dynamics \eqref{eq:multi} of node $i$ are
the feedback loop of Figure~\ref{fig:loop_tf} with
\begin{subequations}
\begin{equation}
\begin{gathered}
  W_i(s) = \beta_i + (1-\beta_i) \sum_{j \in \Sc(i)}
  w_{ij}\,e^{-s\tau_{ij}},\\
  L_i(s) = \frac{K_i\,W_i(s)}{s},
\end{gathered}
  \label{eq:W_i}
\end{equation}
and the demand-to-order and demand-to-inventory responses of the node
are
\begin{align}
  T_i(s) &:= \frac{o_i(s)}{d_i(s)}
    = \frac{s+K_i}{s+K_i\,W_i(s)},
  \label{eq:Ti}\\
  G_i(s) &:= \frac{\IP_i(s)}{d_i(s)}
    = \frac{\bigl(\sum_{j \in \Sc(i)} w_{ij}\,e^{-s\tau_{ij}}\bigr)\,
      T_i(s) - 1}{s},
  \label{eq:Gi}
\end{align}
\end{subequations}
where $d_i$ is the aggregate demand \eqref{eq:coupling} of node $i$.
\end{thm}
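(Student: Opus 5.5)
The plan is to take Laplace transforms of \eqref{eq:multi} with zero initial conditions. Since the statement concerns transfer functions, we may work with deviations from the equilibrium of Remark~\ref{rem:ss}. Write $D_i(s) := \sum_{j \in \Sc(i)} w_{ij} e^{-s\tau_{ij}}$ for the weighted delay operator. Then \eqref{eq:inv_multi} becomes $s\,\IP_i = D_i\, o_i - d_i$, so
\[
\IP_i = \frac{D_i\, o_i - d_i}{s}.
\]
For \eqref{eq:sl_multi}, each term $\int_{t-\tau_{ij}}^{t} o_i(m)\,\mathrm{d}m$ transforms to $\frac{1-e^{-s\tau_{ij}}}{s}\,o_i$. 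Using $\sum_j w_{ij}=1$ from \eqref{eq:weights}, this gives
\[
\SL_i = \frac{1-D_i}{s}\, o_i,
\]
which is exactly the supply-line block of Figure~\ref{fig:loop_physical}.

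Next, form the credited stock $y_i = \IP_i + \beta_i \SL_i$. Collecting the $o_i$ terms gives
\[
y_i = \frac{\bigl(D_i + \beta_i(1-D_i)\bigr)\,o_i - d_i}{s} = \frac{W_i\, o_i - d_i}{s},
\]
because $D_i + \beta_i - \beta_i D_i = \beta_i + (1-\beta_i)D_i = W_i$. This identity is the key step. It shows that the inventory and supply-line paths merge into the single block $W_i/s$ acting on $o_i$, with $d_i$ entering through $-1/s$. This is the loop of Figure~\ref{fig:loop_tf}, with loop gain $L_i = K_i W_i/s$.

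Substituting into the policy \eqref{eq:policy_multi} gives
\[
o_i = d_i - \frac{K_i}{s}\bigl(W_i\, o_i - d_i\bigr).
\]
Solving, $o_i\,(1 + L_i) = d_i\,(1 + K_i/s)$. Multiplying by $s$ yields $T_i = (s+K_i)/(s+K_i W_i)$, which is \eqref{eq:Ti}. Substituting $o_i = T_i d_i$ back into the expression for $\IP_i$ gives \eqref{eq:Gi} directly.

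The only point needing care is well-posedness: the division by $s + K_i W_i(s)$ must be legitimate, and the loop must be a genuine feedback interconnection rather than an algebraic loop. The main obstacle is to confirm that $1 + L_i(s)$ is not identically zero. At large real $s$ we have $W_i(s) \to \beta_i$, so $s + K_i W_i$ grows like $s$ and is nonzero. Hence the rational-in-exponentials expressions define meromorphic transfer functions and the manipulations above are valid. The delays are strictly positive by A1/A2, which also makes the loop strictly proper, since $L_i \sim K_i\beta_i/s$. Stability itself is not claimed here and is deferred to the later analysis.
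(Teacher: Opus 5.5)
Your proof is correct and follows the same route as the paper's: Laplace transform at zero initial conditions, merge $\IP_i$ and $\beta_i\SL_i$ into $s\,y_i = W_i o_i - d_i$, eliminate $y_i$ to obtain $T_i$, and back-substitute $o_i = T_i d_i$ to obtain $G_i$. Two small points. You make explicit the use of $\sum_j w_{ij}=1$ in the supply-line step, which the paper relies on only implicitly. Your well-posedness remark is a harmless addition, though $L_i \sim K_i\beta_i/s$ holds only along the real axis; strict properness follows instead from $|W_i|\le 1$ in the closed right half-plane.
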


\begin{pf}
Taking the Laplace transform of
\eqref{eq:inv_multi}--\eqref{eq:sl_multi} at zero initial conditions,
\begin{align}
  s\,\IP_i &= \Bigl(\textstyle\sum_{j} w_{ij}\,e^{-s\tau_{ij}}\Bigr) o_i - d_i,
  \label{eq:ip_z}\\
  \SL_i &= \textstyle\sum_{j} w_{ij}\,\frac{1 - e^{-s\tau_{ij}}}{s}\, o_i.
  \label{eq:sl_z}
\end{align}

Substituting these into the credited stock
$y_i = \IP_i + \beta_i\,\SL_i$ of \eqref{eq:policy_multi} yields
\begin{equation}
  s\,y_i = W_i(s)\, o_i - d_i,
  \qquad
  o_i = d_i - K_i\, y_i,
  \label{eq:y_solved}
\end{equation}
which is the loop of Figure~\ref{fig:loop_tf}. Eliminating $y_i$,
\begin{equation*}
  \bigl(s+K_i\,W_i(s)\bigr)\, o_i = \bigl(s+K_i\bigr)\, d_i,
\end{equation*}
yields \eqref{eq:Ti}, and dividing \eqref{eq:ip_z} by $d_i$ and
substituting $o_i = T_i d_i$ yields \eqref{eq:Gi}. \qed
\end{pf}

The two responses share the denominator $s + K_i W_i(s)$, and therefore
the same poles: the $1/s$ in \eqref{eq:Gi} is cancelled by a numerator that
vanishes at $s = 0$, where $W_i(0) = T_i(0) = 1$. The edge weights and delays
enter only through $W_i(s)$.
\begin{cor}[Single route]
\label{cor:single}
For a node with a single supply route of delay $\tau$, we have that $W(s) = \beta + (1-\beta)\,e^{-s\tau}$, $T(s) = \frac{s+K}{s+K W(s)}$, and $G(s) = \frac{e^{-s\tau}\,T(s) - 1}{s}$. 
\end{cor}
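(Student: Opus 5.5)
This is a direct specialization of Theorem~\ref{thm:node}, so I would not redo the Laplace-domain derivation. I would instantiate the nodal formulas \eqref{eq:W_i}--\eqref{eq:Gi} at a node whose supplier set has one element, $\Sc(i)=\{j\}$. Assumption A2 is assumed throughout, so Theorem~\ref{thm:node} applies.

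The steps are as follows. First, the weight constraint \eqref{eq:weights} with a single supplier forces $w_{ij}=1$. Write $\tau=\tau_{ij}$ and drop the node index. Second, the route sum $\sum_{j\in\Sc(i)} w_{ij}e^{-s\tau_{ij}}$ collapses to $e^{-s\tau}$. Substituting this into \eqref{eq:W_i} gives $W(s)=\beta+(1-\beta)e^{-s\tau}$. Third, \eqref{eq:Ti} depends on the routes only through $W$, so it is unchanged in form: $T(s)=(s+K)/(s+KW(s))$. Fourth, inserting the collapsed sum into \eqref{eq:Gi} yields $G(s)=(e^{-s\tau}T(s)-1)/s$.

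There is no genuine obstacle here. The only point needing care is the observation that \eqref{eq:weights} forces the single weight to equal one. Without that, the sum would carry a factor $w_{ij}$ and $W(0)$ would differ from $1$. As a consistency check, I would also note that $W(0)=T(0)=1$ still holds. Hence the numerator of $G$ vanishes at $s=0$, and the apparent pole at the origin cancels, exactly as remarked after Theorem~\ref{thm:node}.
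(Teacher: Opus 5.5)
Your proposal is correct and follows the paper's own (implicit) argument: the corollary is stated without a separate proof, as the direct specialization of Theorem~\ref{thm:node}. With a single supplier, \eqref{eq:weights} forces $w_{ij}=1$, the route sum collapses to $e^{-s\tau}$, and \eqref{eq:W_i}--\eqref{eq:Gi} then give the three formulas. Your check $W(0)=T(0)=1$ is the same cancellation of the pole at the origin that the paper notes after the theorem.
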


\noindent
At $\beta = 0$ the loop transfer function is $L_i(s) = K e^{-s\tau}/s$: the
model dynamics collapse into a proportional control loop for an integrator
with a delay.

\begin{figure}
\centering
\resizebox{0.72\linewidth}{!}{%
\begin{tikzpicture}[>=latex, font=\small,
  blk/.style={draw, minimum height=16pt, minimum width=36pt, inner sep=3pt},
  sm/.style={draw, circle, minimum size=11pt, inner sep=0pt}]
\node (din) at (-2.4,0) {$d_i$};
\node[sm] (so) at (-1.2,0) {};
\node[blk] (W) at (0.5,0) {$W_i(s)$};
\node[sm] (sy) at (2.1,0) {};
\node[blk] (int) at (3.7,0) {$\dfrac{1}{s}$};
\node[blk] (K) at (0.5,-1.5) {$K_i$};
\draw[->] (din) -- node[pos=0.35,above]{\scriptsize$+$} (so);
\draw[->] (so) -- node[above]{$o_i$} (W);
\draw[->] (W) -- node[pos=0.75,above]{\scriptsize$+$} (sy);
\draw[->] (sy) -- (int);
\draw[->] (int.east) -- ++(0.9,0) node[above left]{$y_i$} coordinate (yout);
\draw[->] (yout) |- (K);
\draw[->] (K) -| node[pos=0.97,right]{\scriptsize$-$} (so);
\draw[->] (-2.4,1.2) node[left]{$d_i$} -| node[pos=0.93,left]{\scriptsize$-$} (sy);
\end{tikzpicture}}
\caption{The nodal dynamics simplify into a SISO negative feedback loop
with loop transfer function $L_i(s) = K_i\,W_i(s)/s$ \eqref{eq:W_i}.}
\label{fig:loop_tf}
\end{figure}

\subsection{Network dynamics}
\label{sec:composition}

To determine the bullwhip effect across the network, we must analyze how single-node dynamics couple through the network structure. The nodes connect through the demand aggregation \eqref{eq:coupling}, where one node's orders become another's demand. Defining the transfer functions
\begin{equation}
T_{\ell \to j}(s) \;:=\; \frac{o_j(s)}{v_\ell(s)},
\text{ such that }
o_j = \sum_{\ell \in \Nc \setminus \Sc_0} T_{\ell \to j} v_\ell
\label{eq:Tji_def}
\end{equation}
and stacking the orders $o = (o_j)_{j \in \Nc \setminus \Sc_0}$ and exogenous demands $v = (v_\ell)_{\ell \in \Nc \setminus \Sc0}$ yields the square network transfer matrix,
\begin{equation}
o = T(s)\, v,
\qquad
T_{j\ell}(s) = T_{\ell \to j}(s).
\label{eq:matrixT}
\end{equation}
We quantify the bullwhip effect using the $\mathcal{H}_\infty$ norm \cite{ouyang2010bullwhip} of network transfer matrix \eqref{eq:matrixT}
\begin{equation}
\|T\|_\infty = \sup_{\omega}\ \bar\sigma\bigl(T(\im\omega)\bigr),
\label{eq:hinf_def}
\end{equation}
where $\bar\sigma(\cdot)$ is the largest singular value. This is our primary bullwhip metric. The classical bullwhip measure is the variance ratio
$\text{Var}(o)/\text{Var}(d)$, which under white-noise demand becomes the
squared $\mathcal{H}_2$ norm $\|T\|_2^2$
\citep{dejonckheere2003control,zhou1996robust}. The policy
\eqref{eq:policy_multi} passes demand straight to orders, so $T$ is not
strictly proper and $\|T\|_2$ is infinite \citep{zhou1996robust}; we use
$\|T\|_\infty$ instead \citep{li2024mitigating}. Transient responsiveness is measured by the gain crossover frequency
\begin{equation}
\omega_g(w, K)
= \min\bigl\{\omega > 0 \;:\; K\,|W(\im\omega)| = \omega\bigr\},
\label{eq:wg_def}
\end{equation}
of the loop, where $|L_i(\im\omega)| = 1$ which we use in place of bandwidth since the feedthrough term in
\eqref{eq:policy_multi} gives $|T(\im\infty)| = 1$, and $|T(0)| = 1$, so $|T|$
never falls $3$~dB below its low-frequency value limit.  For a network we report $\omega_g=\min_i \omega_{g,i}$ the crossover of the slowest node. Local robustness is quantified by the disk margin of node
$i$'s loop \citep{seiler2020introduction},
\begin{equation}
\alpha_i(w, K)
= \Bigl(\sup_{\omega \ge 0}
\bigl|S_i(\im\omega) - \tfrac{1}{2}\bigr|\Bigr)^{-1},
\qquad S_i = \frac{1}{1+L_i},
\label{eq:disk}
\end{equation}
with $L_i$ as in \eqref{eq:W_i}.

To evaluate the network-wide response, we decompose the topology into directed paths of individual ordering loops. Exogenous demand from node $\ell$ reaches node $j$ along the set of paths
\begin{gather}
  \squeeze{\Ppaths{\ell}{j}
  = \bigl\{ (\ell, \dots, l, {l+1}, \dots, j) ~|~
    (l, {l+1}) \in \Ec \bigr\},}
  \label{eq:paths_def}
\end{gather}
and we define the transfer function of a path $p \in \Ppaths{\ell}{j}$ as
\begin{equation}
  \Tp{p}(s) = \Bigl(\Pi_{e\in \pe{p}} w_e \Bigr)
  \Bigl(\Pi_{n\in (p \setminus \{n_j\})} T_n\Bigr),
  \label{eq:pathTF}
\end{equation}
where $\pe{p}$ is the set of edges of $p$, with the convention
$\Tp{p}(s) = 1$ whenever $|p| = 1$. The following theorem shows how the network
dynamics decomposes along paths.

\begin{thm}[Path composition]
\label{thm:path}
Let A1--A3 hold. Then for every pair of ordering nodes $\ell, j \in \mathcal{N} \setminus \mathcal{S}_0$,
\begin{equation}
  T_{\ell \to j}(s)
  = T_j(s) \sum_{p \in \Ppaths{\ell}{j}} \Tp{p}(s),
  \label{eq:path_sum}
\end{equation}
with the convention that $T_{\ell \to j}(s) = 0$ if
$\Ppaths{\ell}{j} = \emptyset$.
\end{thm}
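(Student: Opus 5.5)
Combining Theorem~\ref{thm:node} with the coupling \eqref{eq:coupling} gives, in the Laplace domain, $o_j = T_j(s)\,d_j$ and $d_j = \sum_{k\in\Cc(j)} w_{kj}\,o_k + v_j$ for every ordering node $j$. So the network is the linear system
\begin{equation*}
  o_j = T_j(s)\Bigl(v_j + \textstyle\sum_{k \in \Cc(j)} w_{kj}\, o_k\Bigr),
  \qquad j \in \Nc\setminus\Sc_0 .
\end{equation*}
Under A3 every customer $k\in\Cc(j)$ has $k<j$, so this system is triangular: $o_j$ depends only on $v_j$ and on the orders of lower-indexed nodes. We can therefore solve it by forward substitution, with no matrix inversion. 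Sources carry no $T$ and never appear as customers, since they place no orders, so only ordering nodes enter the recursion.

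Fix the demand node $\ell$ and set $v_m=0$ for $m\neq \ell$. Then $T_{\ell\to j} = T_j\bigl(\delta_{j\ell} + \sum_{k\in\Cc(j)} w_{kj}\,T_{\ell\to k}\bigr)$. We prove \eqref{eq:path_sum} by strong induction on the topological index $j$. Every path in $\Ppaths{\ell}{j}$ follows edges $(l,l+1)\in\Ec$ with increasing index, so $T_{\ell\to j}=0$ for $j<\ell$. This is consistent with $\Ppaths{\ell}{j}=\emptyset$, and it serves as the base case.

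For $j=\ell$, acyclicity (A1) gives $\Ppaths{\ell}{\ell}=\{(\ell)\}$, and every customer $k$ of $\ell$ has $k<\ell$, so $T_{\ell\to k}=0$. Hence $T_{\ell\to\ell}=T_\ell$, which matches \eqref{eq:path_sum} with $\Tp{(\ell)}=1$.

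For $j>\ell$, the Kronecker term vanishes. Substituting the induction hypothesis for each $k\in\Cc(j)$ gives
\begin{equation*}
  T_{\ell\to j} = T_j \sum_{k\in\Cc(j)} \sum_{p\in\Ppaths{\ell}{k}} w_{kj}\,T_k\,\Tp{p}.
\end{equation*}
It remains to identify the double sum with $\sum_{q\in\Ppaths{\ell}{j}}\Tp{q}$. The map $(k,p)\mapsto q=(p,j)$, which appends the edge $(k,j)$, is a bijection from $\bigsqcup_{k\in\Cc(j)}\Ppaths{\ell}{k}$ onto $\Ppaths{\ell}{j}$. Its inverse strips the last edge of $q$; the penultimate node of $q$ is some $k\in\Cc(j)$. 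The map is well defined because, in a DAG, appending $j$ to a path ending at $k<j$ never revisits a node. Under this map, \eqref{eq:pathTF} gives $\Tp{q} = w_{kj}\,T_k\,\Tp{p}$. The weight product gains the edge $(k,j)$. The node product, which runs over all nodes of the path except the terminal one, now includes the former terminal node $k$ and excludes the new terminal $j$. This matches the summand, and the induction closes.

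The main obstacle is bookkeeping rather than analysis: $\Tp{p}$ must exclude exactly the terminal node, so that the prefactor $T_j$ in \eqref{eq:path_sum} is not double counted, and the bijection on paths must be checked carefully. Only finitely many paths exist because the graph is a finite DAG, so all sums are finite and no convergence question arises. We work with rational-in-$e^{-s\tau}$ transfer functions formally, or pointwise for each $s$ where the $T_n$ are defined, so the identity holds as an identity of transfer functions. Stability is not needed for this statement.
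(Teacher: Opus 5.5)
Your proposal is correct and takes essentially the same route as the paper's proof. Both argue by induction on the topological index, using the recursion $T_{\ell\to j}=T_j\sum_{k\in\Cc(j)}w_{kj}T_{\ell\to k}$ for $j>\ell$, the append identity $w_{kj}T_k\Tp{p}=\Tp{(p,j)}$, and the decomposition of $\Ppaths{\ell}{j}$ by penultimate node. You are slightly more careful than the paper on two points: you make the disjointness of that union explicit, and you handle customers $k<\ell$, whose $T_{\ell\to k}=0$ should be justified by forward substitution in the triangular recursion with zero inputs rather than by the emptiness of $\Ppaths{\ell}{k}$, as your wording suggests.
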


\begin{pf}
See Appendix~\ref{app:path}.
\end{pf}
Substituting \eqref{eq:path_sum} into \eqref{eq:matrixT}  gives every entry of the network transfer matrix, for row $j$ and column $\ell$,
\begin{equation}
  T(s) = \Bigl[\, T_j(s) \sum_{p \in \Ppaths{\ell}{j}} \Tp{p}(s) \Bigr]_{j,\ell \in \Nc \setminus \Sc_0}.
  \label{eq:T_paths}
\end{equation}

Theorem~\ref{thm:path} decouples the internal dynamics of the individual nodes
from the larger network structure. Exogenous demand propagates through the
network by multiplying sequential nodes along a path and summing across
parallel paths, so the network-wide bullwhip effect is the accumulation of
local ordering behaviours, and the poles of the network are exactly the poles
of the nodes it contains. Figure~\ref{fig:paths} illustrates the application of
Theorem~\ref{thm:path} to a seven-node network.

\begin{figure}[!htb]
\centering
\resizebox{0.88\linewidth}{!}{%
\begin{tikzpicture}[>=latex, font=\small,
  nd/.style={circle, draw, fill=black!8, minimum size=20pt, inner sep=1pt},
  wl/.style={fill=white, inner sep=1pt, font=\scriptsize}]
  \node[nd] (n1) at (0,0)      {$1$};
  \node[nd] (n2) at (2.4,1.5)  {$2$};
  \node[nd] (n3) at (2.4,0)    {$3$};
  \node[nd] (n4) at (2.4,-1.7) {$4$};
  \node[nd] (n5) at (5.6,1.1)  {$5$};
  \node[nd] (n6) at (5.6,-0.9) {$6$};
  \node[nd] (n7) at (8.2,0.1)  {$7$};
  \draw[->, thin] (n1) -- node[wl, above, sloped, pos=0.5]{$w_{12}$} (n2);
  \draw[->, thin] (n1) -- node[wl, above, pos=0.5]{$w_{13}$} (n3);
  \draw[->, thin] (n1) -- node[wl, below, sloped, pos=0.5]{$w_{14}$} (n4);
  \draw[->, thin] (n2) -- node[wl, above, sloped, pos=0.45]{$w_{25}$} (n5);
  \draw[->, thin] (n2) -- node[wl, above, sloped, pos=0.72]{$w_{26}$} (n6);
  \draw[->, thin] (n3) -- node[wl, above, sloped, pos=0.30]{$w_{35}$} (n5);
  \draw[->, thin] (n3) -- node[wl, below, sloped, pos=0.35]{$w_{36}$} (n6);
  \draw[->, thin] (n4) -- node[wl, below, sloped, pos=0.5]{$w_{46}$} (n6);
  \draw[->, thin] (n5) -- node[wl, above, sloped, pos=0.5]{$w_{57}$} (n7);
  \draw[->, thin] (n6) -- node[wl, below, sloped, pos=0.5]{$w_{67}$} (n7);
\end{tikzpicture}}\\[8pt]
\newcommand{\pathpanel}[2]{%
\begin{tikzpicture}[baseline=0pt]
  \begin{scope}[scale=0.30]
    \coordinate (n1) at (0,0);
    \coordinate (n2) at (2.4,1.5);
    \coordinate (n3) at (2.4,0);
    \coordinate (n4) at (2.4,-1.7);
    \coordinate (n5) at (5.6,1.1);
    \coordinate (n6) at (5.6,-0.9);
    \coordinate (n7) at (8.2,0.1);
    \foreach \a/\b in {n1/n2, n1/n3, n1/n4, n2/n5, n2/n6, n3/n5,
                       n3/n6, n4/n6, n5/n7, n6/n7}
      \draw[black!25, line width=0.5pt] (\a) -- (\b);
    \draw[#1, line width=1.5pt, line cap=round, rounded corners=2pt] #2;
    \foreach \p in {n1, n2, n3, n4, n5, n6, n7}
      \fill[black!50] (\p) circle (2.2pt);
  \end{scope}
\end{tikzpicture}}%
\setlength{\tabcolsep}{5pt}%
\resizebox{\linewidth}{!}{%
\begin{tabular}{ccccc}
\pathpanel{violet!60}{(n1) -- (n2) -- (n5) -- (n7)} &
\pathpanel{magenta!60}{(n1) -- (n3) -- (n5) -- (n7)} &
\pathpanel{teal!70}{(n1) -- (n2) -- (n6) -- (n7)} &
\pathpanel{orange!80}{(n1) -- (n3) -- (n6) -- (n7)} &
\pathpanel{blue!60}{(n1) -- (n4) -- (n6) -- (n7)} \\[2pt]
\colorbox{violet!18}{\scriptsize $(1,2,5,7)$} &
\colorbox{magenta!16}{\scriptsize $(1,3,5,7)$} &
\colorbox{teal!20}{\scriptsize $(1,2,6,7)$} &
\colorbox{orange!22}{\scriptsize $(1,3,6,7)$} &
\colorbox{blue!13}{\scriptsize $(1,4,6,7)$}
\end{tabular}}
{\scriptsize\setlength{\fboxsep}{6pt}
\begin{align*} 
  T_{1\to 7} ={}&
  \colorbox{violet!18}{\tiny$T_1 w_{12} T_2 w_{25} T_5 w_{57} T_7$}
  + \colorbox{magenta!16}{\tiny$T_1 w_{13} T_3 w_{35} T_5 w_{57} T_7$}
  \\[-3pt]
  &+ \colorbox{teal!20}{\tiny$T_1 w_{12} T_2 w_{26} T_6 w_{67} T_7$}
  + \colorbox{orange!22}{\tiny$T_1 w_{13} T_3 w_{36} T_6 w_{67} T_7$}
  \\[-3pt]
  &+ \colorbox{blue!13}{\tiny$T_1 w_{14} T_4 w_{46} T_6 w_{67} T_7$}.
\end{align*}}
\caption{The five paths from node $1$ to node $7$. Top: the network, orders
drawn customer to supplier and labelled by their weights. Bottom: one panel per path, coloured as its term in the
expansion above, which is Theorem~\ref{thm:path} applied to this network.}
\label{fig:paths}
\end{figure}
We also have some immediate corollaries:
\begin{cor}[Serial chain]
\label{cor:serial}
If $\Ppaths{\ell}{j} = \{p\}$, so that a single path runs from $\ell$ to $j$,
then $T_{\ell \to j} = \bigl(\prod_{e \in \pe{p}} w_e\bigr)\prod_{n \in p} T_n$.
\end{cor}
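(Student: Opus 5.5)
The plan is to specialize Theorem~\ref{thm:path} to the case $\Ppaths{\ell}{j} = \{p\}$, so that the sum in \eqref{eq:path_sum} collapses to the single term $T_j(s)\,\Tp{p}(s)$. Substituting the definition \eqref{eq:pathTF} of the path transfer function then gives
\begin{equation*}
T_{\ell\to j} = T_j \Bigl(\textstyle\prod_{e\in\pe{p}} w_e\Bigr)\Bigl(\textstyle\prod_{n\in p\setminus\{n_j\}} T_n\Bigr),
\end{equation*}
and it remains only to absorb the prefactor $T_j$ into the nodal product.

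Since $p$ ends at $j$ and, under A1, a directed path in a DAG visits each node at most once, the node $j$ occurs exactly once in $p$. Hence $T_j \prod_{n\in p\setminus\{n_j\}} T_n = \prod_{n\in p} T_n$, which gives the claimed formula. Commutativity of scalar transfer functions makes the order of the factors irrelevant.

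I would also check the degenerate case $\ell=j$. Here the unique path is the trivial path $(j)$ with $|p|=1$, so $\pe{p}=\emptyset$, the empty edge product equals $1$, and the convention $\Tp{p}=1$ gives $T_{j\to j}=T_j$. This agrees with the corollary, since $\prod_{n\in p}T_n = T_j$. It is consistent with \eqref{eq:Ti}, because no other path returns to $j$ in a DAG.

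There is no real obstacle; the only point needing care is the bookkeeping of the endpoint factor $T_j$, namely that the path product \eqref{eq:pathTF} excludes $n_j$ precisely so that \eqref{eq:path_sum} can reinsert it once, together with the acyclicity that guarantees $j$ is not repeated.
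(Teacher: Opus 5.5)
Your proposal is correct and matches the paper's approach: the paper lists this as an immediate corollary of Theorem~\ref{thm:path}, obtained by collapsing the sum in \eqref{eq:path_sum} to the single term $T_j\,\Tp{p}$ and then absorbing $T_j$ into the nodal product of \eqref{eq:pathTF}. Your bookkeeping points are also right: acyclicity means $j$ appears only once in $p$, and the trivial path covers the case $\ell=j$.
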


\begin{cor}[Causality]
\label{cor:causal}
Under A1--A3, $T$ is lower triangular and $o_j = \sum_{\ell \in \Nc \setminus \Sc_0, \ell \le j} T_{\ell \to j} v_\ell$.
\end{cor}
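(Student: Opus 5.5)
The statement to prove is Corollary~\ref{cor:causal}. The plan is to read it directly off Theorem~\ref{thm:path}, using the topological indexing A3.

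\textbf{Step 1: lower triangularity.} It suffices to show that $T_{j\ell}(s) = T_{\ell\to j}(s) = 0$ whenever $\ell > j$. By Theorem~\ref{thm:path}, $T_{\ell \to j}$ vanishes whenever $\Ppaths{\ell}{j} = \emptyset$, so I only need to show that no path runs from $\ell$ to $j$ when $\ell > j$.

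Take any path $p = (\ell = n_0, n_1, \dots, n_m = j) \in \Ppaths{\ell}{j}$. Each consecutive pair $(n_k, n_{k+1})$ lies in $\Ec$, so A3 gives $n_k < n_{k+1}$. By induction on $k$ the indices along $p$ strictly increase, hence $\ell \le j$, with equality only for the trivial path $|p|=1$.

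Therefore $\ell > j$ forces $\Ppaths{\ell}{j} = \emptyset$, and hence $T_{j\ell} = 0$. This shows $T$ is lower triangular.

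\textbf{Step 2: the causal sum.} Substituting this into the definition \eqref{eq:Tji_def}, $o_j = \sum_{\ell} T_{\ell\to j} v_\ell$, every term with $\ell > j$ vanishes. The sum therefore truncates to $\ell \le j$, which is the claimed formula.

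\textbf{Step 3: the diagonal.} For completeness, the diagonal entries are $T_{j\to j} = T_j$. The only path from $j$ to $j$ is the trivial one: a nontrivial path would close a directed cycle, contradicting A1. With the convention $\Tp{p} = 1$ for $|p| = 1$, Theorem~\ref{thm:path} gives $T_{j \to j} = T_j \cdot 1 = T_j$.

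\textbf{Main obstacle.} There is none of substance. The only care needed is the convention for the length-one path, and the observation that the strict monotonicity of indices under A3 is exactly what excludes both backward paths and cycles. Note that the edge orientation $(i,j)$, running from customer to supplier, matches the direction in which paths in $\Ppaths{\ell}{j}$ are traversed, so the inequality $n_k < n_{k+1}$ applies along each path as written.
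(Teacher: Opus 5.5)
Your proposal is correct and matches the paper, which treats this as an immediate consequence of Theorem~\ref{thm:path}: A3 makes indices strictly increase along every path, so $\Ppaths{\ell}{j}=\emptyset$ and hence $T_{\ell\to j}=0$ whenever $\ell>j$. One caveat: the appendix's induction for Theorem~\ref{thm:path} only runs over $j\ge\ell$ and tacitly uses $o_k=0$ for $k<\ell$ in both its base case and its step, so the $\ell>j$ case you invoke is better justified directly, by strong induction on $k<\ell$ in $o_k=T_k\bigl(\sum_{c\in\Cc(k)}w_{ck}\,o_c+v_k\bigr)$, where every $c<k$ and $v_k=0$.
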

This corollary captures the intuition that orders should only depend on upstream demand and is a consequence of the DAG structure.

\begin{cor}[Stability]
\label{cor:stability}
Under A1--A3, the poles of $T_{\ell \to j}$ satify $\mathrm{Poles}(T_{\ell \to j}) =
  \bigcup_{p \in \Ppaths{\ell}{j}, n \in p}
  \mathrm{Poles}(T_n)
  = \bigcup_{p \in \Ppaths{\ell}{j}, n \in p}
  \bigl\{\, s ~|~ s + K_n\,W_n(s) = 0 \,\bigr\}$
so the network is internally stable if and only if every node $T_n$ is stable.
\end{cor}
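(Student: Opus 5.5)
The plan is to read off the poles from the path decomposition of Theorem~\ref{thm:path}. Each path $p \in \Ppaths{\ell}{j}$ contains $\ell$, and every node $n \in p \setminus \{j\}$ is an ordering node. The node $j$ is also ordering by hypothesis. So every factor appearing in \eqref{eq:path_sum} is one of the nodal responses $T_n$ of Theorem~\ref{thm:node}, or a nonnegative scalar weight. Since $T_n = (s+K_n)/(s+K_nW_n(s))$, and $W_n$ is entire (a finite sum of exponentials), the only singularities of $T_n$ are zeros of $s+K_nW_n(s)$.

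\textbf{Step 1 (inclusion $\subseteq$).} A product or sum of meromorphic functions can only have poles where some factor has a pole. Hence
\[
\mathrm{Poles}(T_{\ell\to j}) \subseteq \bigcup_{p,\,n\in p}\mathrm{Poles}(T_n).
\]
Moreover, each $\mathrm{Poles}(T_n)$ is contained in $\{s+K_nW_n(s)=0\}$.

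\textbf{Step 2 (inclusion $\supseteq$, and identification of nodal poles with the characteristic zeros).} This is the main obstacle, because cancellations could in principle occur. The first issue is a zero of the numerator $s+K_n$ coinciding with a zero of the denominator. At $s=-K_n$ the denominator equals $-K_n + K_n W_n(-K_n)$, which vanishes only if $W_n(-K_n)=1$. I would argue that this is non-generic, or else treat the poles as counted before cancellation, i.e.\ as roots of the characteristic function.

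The cleaner route, which I would take, is to interpret ``poles'' as the characteristic roots of the interconnection. Write the network as $o = \mathrm{diag}(T_n)\,(W^\top o + v)$. By A3, $W^\top$ is strictly triangular, so $I-\mathrm{diag}(T_n)W^\top$ is unimodular triangular with unit diagonal. Its determinant is $1$, so the characteristic function of the whole network is
\[
\prod_n \bigl(s+K_nW_n(s)\bigr).
\]
Consequently the closed-loop modes are exactly the union of the nodal characteristic roots. For a fixed entry, only the nodes lying on some path from $\ell$ to $j$ enter $T_{\ell\to j}$, which gives the restricted union. For generic parameters I would also note that distinct paths can cancel a pole of $T_n$ only through an exact identity among the $w$'s and delays; I would flag this as holding generically.

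\textbf{Step 3 (internal stability).} Internal stability of the interconnection means that all transfer functions from the injected signals ($v$, plus any internal perturbations of the node inputs) to all internal signals are stable. By the triangular structure, each such map is a finite sum of products of $T_n$ and of the $G_n$ of \eqref{eq:Gi}; the $G_n$ share the poles of $T_n$, since the $1/s$ cancels. Therefore, if every $T_n$ is stable, meaning $s+K_nW_n(s)$ has no zeros in $\mathrm{Re}\,s\ge0$, the network is stable. Conversely, the diagonal entry $T_{n\to n}=T_n$ by Theorem~\ref{thm:path}, because the only path from $n$ to $n$ is the trivial one. So any unstable node appears directly in $T$, which gives necessity.
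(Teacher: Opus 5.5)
Your route is the paper's. There the corollary is stated as immediate from Theorem~\ref{thm:path}, by reading the poles off the sum of products \eqref{eq:path_sum}, which is your Step~1. The if-and-only-if is your Step~3: sufficiency because sums of products of stable factors are stable, and necessity from the diagonal entries $T_{n\to n}=T_n$, which the paper leaves implicit. Where you go further is in worrying about cancellations, and the two halves of your Step~2 deserve opposite verdicts.

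\textbf{Within-node cancellation cannot happen.} The cancellation at $s=-K_n$ is not merely non-generic; it is impossible under A1--A2. You should replace the hedge with
\[
  W_n(-K_n)=\beta_n+(1-\beta_n)\textstyle\sum_{j\in\Sc(n)} w_{nj}\,e^{K_n\tau_{nj}}>1,
\]
which holds because $K_n>0$, $\tau_{nj}>0$, $w\ge 0$, $\sum_j w_{nj}=1$ and $\beta_n<1$. Hence $s+K_nW_n(s)=K_n\bigl(W_n(-K_n)-1\bigr)>0$ at $s=-K_n$, so the numerator and denominator of $T_n$ share no zero. The second equality $\mathrm{Poles}(T_n)=\{s : s+K_nW_n(s)=0\}$ therefore holds exactly, with multiplicities. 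This also shows that the only zero of $T_n$ is $-K_n<0$.

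\textbf{Cross-node cancellation can happen.} Your caveat here is genuinely needed, and the paper's one-line derivation skips it: for a single entry, only $\subseteq$ holds in general. Take a chain in which $\ell$ orders only from $n$, and $n$ orders only from a source over one route of delay $1$, with $\beta_\ell=\beta_n=0$ and $K_n=0.1$. Then $s+0.1e^{-s}$ has a simple real root $s=-a$ with $a\approx 0.112$. Choosing $K_\ell=a$ makes the simple zero of $T_\ell$ cancel this pole in $T_{\ell\to n}=T_\ell T_n$ (Corollary~\ref{cor:serial}), while the denominator of $T_\ell$ at $-a$ is $a(e^{a\tau_{\ell n}}-1)>0$, so $T_\ell$ has no pole there.

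So the first equality should be read as an inclusion, with equality for generic parameters. The stability claim rests on your Step~3, not on that equality. This is also why your definition of internal stability, with an injection point at every ordering node, is the right one.

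Your determinant computation is correct: the return difference is unit triangular, so the interconnection creates no new modes. But it says nothing about a single entry and is not needed for either direction.
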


These results hold for any linear ordering policy, since the proof uses only
$o_i = T_i d_i$; a nonlinear policy must be linearized first
(Remark~\ref{rem:ss}).

\subsection{Stability margins}
\label{sec:margins}

Since network stability decouples node by node
(Corollary~\ref{cor:stability}), we quantify network robustness through each
node's loop $L_i$. Each node's dynamics have the form shown in
Figure~\ref{fig:loop_tf}, which is amenable to the classical frequency-domain
margins and performance metrics. For the case of a single supplier several of
these have closed forms.

\begin{thm}[Margins, single route]
\label{thm:margins_single}
Let a node have a single supply route of delay $\tau$, credit $\beta = 0$, and gain $K > 0$. Then the gain marin, phase margin, and crossover frequency are $g_m(\tau, K) = \frac{K^*(\tau)}{K}$ with $K^*(\tau) = \frac{\pi}{2\tau}$, $\varphi_m(\tau, K) = \frac{\pi}{2} - K\,\tau$, and $\omega_g(K) = K$ respectively.
\end{thm}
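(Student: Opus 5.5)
By Corollary~\ref{cor:single} with $\beta = 0$, the loop transfer function from Theorem~\ref{thm:node} is $L(s) = K e^{-s\tau}/s$. The plan is to read off all three quantities directly from its frequency response
\begin{equation*}
L(\im\omega) = \frac{K}{\omega}\, e^{-\im(\omega\tau + \pi/2)}, \qquad \omega > 0 .
\end{equation*}
This gives magnitude $|L(\im\omega)| = K/\omega$ and phase $\angle L(\im\omega) = -\pi/2 - \omega\tau$.

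First the crossover. Since $|W(\im\omega)| = |e^{-\im\omega\tau}| = 1$, the defining equation in \eqref{eq:wg_def} becomes $K = \omega$. The magnitude $K/\omega$ is strictly decreasing, so the crossover is unique and $\omega_g = K$.

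Second the phase margin. It is $\varphi_m = \pi + \angle L(\im\omega_g) = \pi - \pi/2 - K\tau = \pi/2 - K\tau$.

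Third the gain margin. The phase crossover $\omega_p$ is the smallest $\omega > 0$ with $\angle L(\im\omega) = -\pi$, that is, $\omega_p\tau = \pi/2$, so $\omega_p = \pi/(2\tau)$. At that frequency $|L(\im\omega_p)| = 2K\tau/\pi$, and so $g_m = 1/|L(\im\omega_p)| = \pi/(2K\tau) = K^*(\tau)/K$.

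The main obstacle is to justify that these classical margins are the right stability quantities for a delay system, which is infinite-dimensional. I would appeal to the Nyquist criterion for $L$: it has a single pole at the origin, which is handled by indentation, and $L$ is otherwise analytic with no open-loop right-half-plane poles. Closed-loop stability therefore requires that the Nyquist plot not encircle $-1$.

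The phase $-\pi/2 - \omega\tau$ decreases monotonically in $\omega$, and the magnitude $K/\omega$ decreases monotonically. Hence the first crossing of the negative real axis, at $\omega_p$, is the one with the largest magnitude, and the plot avoids encircling $-1$ exactly when $K/\omega_p < 1$, i.e.\ when $K < \pi/(2\tau) = K^*(\tau)$.

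As a cross-check, one can verify directly that the characteristic equation $s + K e^{-s\tau} = 0$ has a root on $\im\mathbb{R}$ exactly at $K = K^*$ with $s = \pm \im\pi/(2\tau)$. This confirms that $K^*$ is the true stability boundary, so $g_m = K^*/K$ and $\varphi_m > 0 \iff K < K^*$ are consistent.
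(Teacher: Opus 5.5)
Your proposal is correct and follows the paper's proof. Like the paper, you read the magnitude $K/\omega$ and phase $-\pi/2-\omega\tau$ off $L(\im\omega)=Ke^{-\im\omega\tau}/(\im\omega)$, use their monotonicity, and obtain $\omega_g=K$, $\varphi_m=\pi/2-K\tau$, and a phase crossover at $\pi/(2\tau)$ with $g_m=\pi/(2K\tau)$.

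Your Nyquist argument and the imaginary-axis root check of $s+Ke^{-s\tau}=0$ are additions the paper does not make, since it only computes the classical margins. Both are correct, and they confirm that $K^*(\tau)$ is the true stability boundary of the delay loop.
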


\begin{pf}
With one route, $W(\im\omega) = e^{-\im\omega\tau}$, so
\begin{equation}
  |L(\im\omega)| = \frac{K}{\omega},
  \qquad
  \angle L(\im\omega) = -\omega\tau - \frac{\pi}{2},
  \label{eq:L_polar_single}
\end{equation}
both are strictly decreasing in $\omega$. The phase crosses $-\pi$ at
$\omega_c = \pi/(2\tau)$, so
$g_m = 1/|L(\im\omega_c)| = \pi/(2 K \tau)$; the gain crosses one
at $\omega_g = K$, where
$\varphi_m = \pi + \angle L(\im\omega_g) = \pi/2 - K\tau$. \qed

\end{pf}
Theorem~\ref{thm:margins_single} establishes the trade-off between transient
responsiveness (bandwidth) and the bullwhip effect (peak amplification). Since
the crossover frequency/bandwith is $K$ and the margins  depend on $K\tau$, increasing the gain for faster
responsiveness degrades the margins, and at $K\tau = \pi/2$ the phase margin
vanishes and $\|T\|_\infty$ diverges. A longer delay forces a proportional
reduction in bandwidth to maintain the same margins.

\subsection{Margins with multiple suppliers}
\label{sec:margins_multi}

With a single supplier, stability margins are dictated
by the delay of that one edge. With multiple suppliers we lose these closed
forms but gain degrees of freedom we can use to our advantage. Both the
bandwidth and the stability margin are governed by the choice of routing
weights $w$ and proportional gain $K$, and improving one inherently limits
the other. We map
this Pareto front of optimal trade-offs by solving the following problem:
\begin{equation}
\begin{aligned}
  \omega_g^*(\alpha_{\min})
  = \max_{w,\,K}\;\; & \omega_g(w, K)
  \\
  \text{s.t.}\;\;
  & \textstyle\sum_j w_j = 1, \quad w \ge 0,
  \\
  & \alpha(w, K) \ge \alpha_{\min}.
\end{aligned}
  \label{eq:pareto_prog}
\end{equation}
The resulting front is shown in Figure~\ref{fig:pareto}; the solver is
described in the supplementary material. Along the front the peak
amplification $\|T\|_\infty$ falls together with the bandwidth, so the margin
requirement fixes responsiveness and amplification at the same time.

\begin{figure}[!t]
  \centering
  \includegraphics[width=0.9\linewidth]{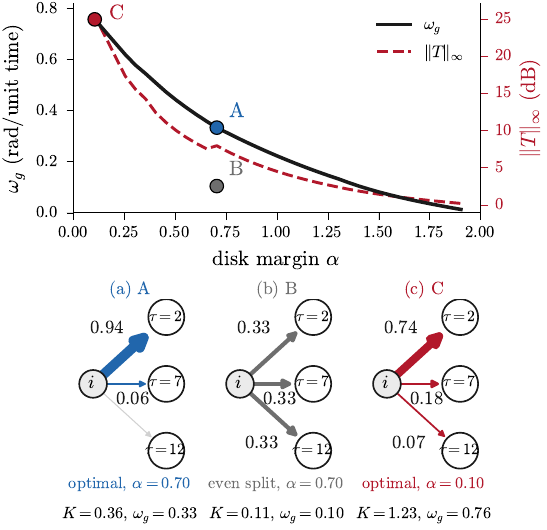}
  \caption{The responsiveness--margin front for a node with routes $\tau \in \{2, 7, 12\}$ time units at $\beta = 0$ (above) and illustrations for optimal designs at $\alpha = 0.7$ (A) and $0.1$ (C) and a suboptimal design with $\alpha =0.7$ (B) (below). The step near $\alpha=0.7$ is where the optimal split switches from two routes to three. Optimizing at $\alpha = 0.7$ leads to a $3.2 \times$ increase in bandwidth.}
  \label{fig:pareto}
\end{figure}

\begin{figure}[!t]
  \centering
  \includegraphics[width=0.95\linewidth]{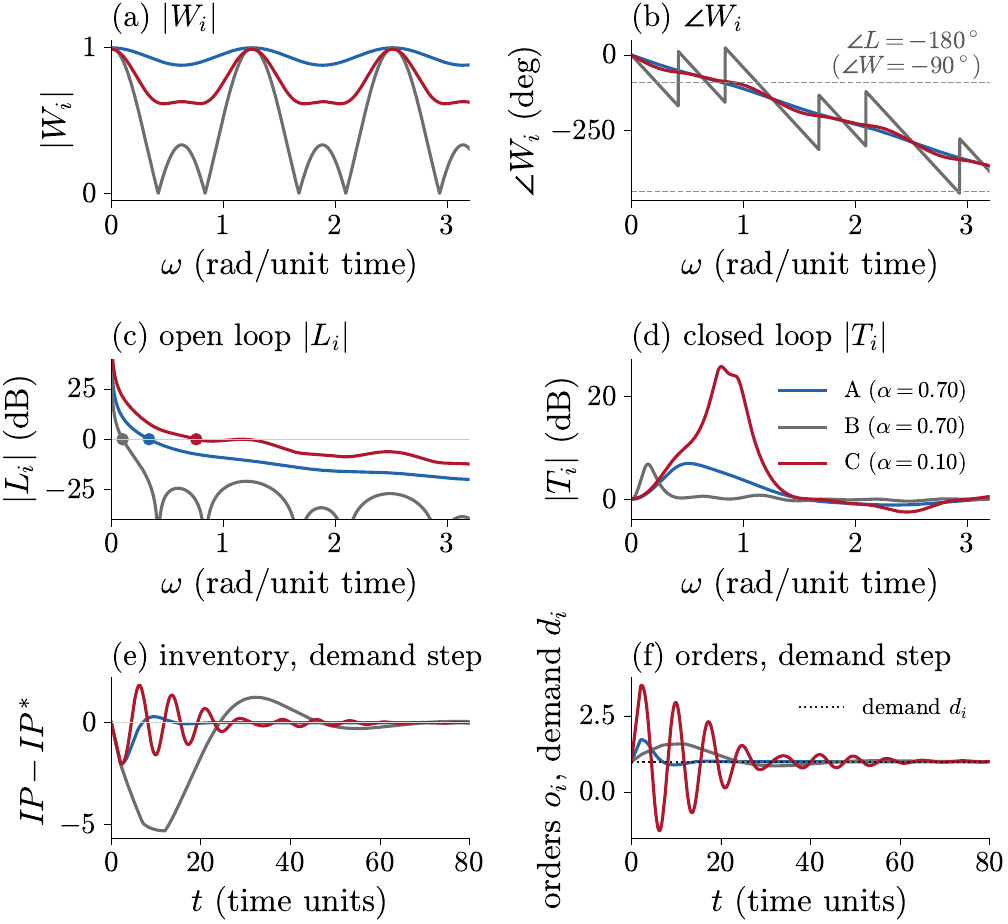}
  \caption{The frequency and time-domain responses of designs A
  , B , and C  from Figure~\ref{fig:pareto}.}
  \label{fig:pareto_bode}
\end{figure}

The resulting Pareto front shows that the optimal policy mixes routes of
different delays at every margin level. For example, designs A and B have
identical stability margins but vastly different bandwidths, as shown in their
responsiveness to a demand step (Fig.~\ref{fig:pareto_bode}f). Near $\alpha=0.7$ two routings, one using all three routes and one leaving the slowest unused, have the same bandwidth. The front is therefore continuous in $\omega_g$, but the two routings differ in peak amplification, creating the step in Figure~\ref{fig:pareto}.

However with multiple suppliers inventory arrives out of phase which results in
incoming supply signals to destructively interfere. To establish exactly why
mixing edges with different delays can be beneficial for stability, we
formalize the following interference.

\begin{prop}[Interference identity]
\label{prop:interference}
Write $W_i = \beta_i + (1-\beta_i)\,W_i^0$, where
$W_i^0(s) = \sum_{j \in \Sc(i)} w_{ij}\,e^{-s\tau_{ij}}$ is the response
from orders placed to goods arriving. Then
\begingroup\small
\begin{align}
  \bigl|W_i^0(\im\omega)\bigr|^2
  &= \textstyle\sum_j w_{ij}^2
    + 2\sum_{j<k} w_{ij} w_{ik}
      \cos\bigl(\omega(\tau_{ij}-\tau_{ik})\bigr),
  \label{eq:interference}\\
  \angle W_i^0(\im\omega)
  &= \operatorname{atan2}\Bigl(
     -\textstyle\sum_j w_{ij}\sin(\omega\tau_{ij}),\nonumber\\
  &\qquad\qquad\quad
     \textstyle\sum_j w_{ij}\cos(\omega\tau_{ij})\Bigr),
  \label{eq:Wphase}\\
  \bigl|W_i(\im\omega)\bigr|^2
  &= \beta_i^2 + 2\beta_i(1-\beta_i)\operatorname{Re} W_i^0(\im\omega)
  \nonumber\\
  &\qquad + (1-\beta_i)^2 \bigl|W_i^0(\im\omega)\bigr|^2,
  \label{eq:Wbeta}
\end{align}\endgroup
and $\bigl|W_i(\im\omega)\bigr| \le 1$ for every $\omega$ and every
$\beta_i \in [0,1)$, with equality only when all routes are in phase at
$\omega$.
\end{prop}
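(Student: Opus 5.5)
The plan is to compute directly on the unit circle, using $W_i^0(\im\omega)=\sum_j w_{ij}e^{-\im\omega\tau_{ij}}$ and the weight constraint \eqref{eq:weights}. No earlier result beyond the definition \eqref{eq:W_i} from Theorem~\ref{thm:node} is needed.

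\emph{Magnitude and phase of $W_i^0$.} For \eqref{eq:interference} I will expand $|W_i^0|^2 = W_i^0\overline{W_i^0} = \sum_{j,k} w_{ij}w_{ik}e^{-\im\omega(\tau_{ij}-\tau_{ik})}$. The diagonal terms $j=k$ give $\sum_j w_{ij}^2$. Each off-diagonal pair $(j,k)$, $(k,j)$ combines to $2w_{ij}w_{ik}\cos(\omega(\tau_{ij}-\tau_{ik}))$, because the two exponentials are conjugates. For \eqref{eq:Wphase} I will separate real and imaginary parts via Euler's formula: $\operatorname{Re}W_i^0=\sum_j w_{ij}\cos(\omega\tau_{ij})$ and $\operatorname{Im}W_i^0=-\sum_j w_{ij}\sin(\omega\tau_{ij})$. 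The argument is then $\operatorname{atan2}(\operatorname{Im},\operatorname{Re})$ by definition, with the usual convention left undefined when $W_i^0(\im\omega)=0$.

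\emph{Magnitude of $W_i$.} For \eqref{eq:Wbeta} I will write $W_i=\beta_i+(1-\beta_i)W_i^0$ with $\beta_i$ real and expand $|a+b|^2=|a|^2+2\operatorname{Re}(\bar a b)+|b|^2$. Since $\bar a=\beta_i$, the cross term is $2\beta_i(1-\beta_i)\operatorname{Re}W_i^0$, which gives \eqref{eq:Wbeta}.

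\emph{The bound $|W_i|\le 1$.} Here I will use the triangle inequality twice. First, $|W_i^0|\le\sum_j w_{ij}|e^{-\im\omega\tau_{ij}}|=\sum_j w_{ij}=1$, using $w_{ij}\ge 0$ and \eqref{eq:weights}. Second, $|W_i|\le \beta_i+(1-\beta_i)|W_i^0|\le 1$, since $\beta_i\in[0,1)$ makes $W_i$ a convex combination of $1$ and $W_i^0$. The same bound also follows from \eqref{eq:interference}, since each cosine is at most $1$ and so $|W_i^0|^2\le(\sum_j w_{ij})^2=1$.

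\emph{The equality case.} This is the step that needs care. Equality in the first triangle inequality holds iff all phasors $e^{-\im\omega\tau_{ij}}$ with $w_{ij}>0$ coincide. Equivalently, $\cos(\omega(\tau_{ij}-\tau_{ik}))=1$ for every active pair, i.e.\ $\omega(\tau_{ij}-\tau_{ik})\in 2\pi\mathbb{Z}$, which is the precise meaning of ``in phase at $\omega$''. Routes with $w_{ij}=0$ are excluded, since they drop out of every sum.

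When $\beta_i=0$ this condition already characterizes $|W_i|=1$. When $\beta_i\in(0,1)$, equality in the second step additionally forces $W_i^0$ to be aligned with the real constant $\beta_i$. Thus $W_i^0(\im\omega)=1$, i.e.\ every active route also satisfies $\omega\tau_{ij}\in 2\pi\mathbb{Z}$, so the routes are in phase with the supply-line credit. This reading follows from \eqref{eq:Wbeta}: setting $|W_i^0|=1$ gives $|W_i|^2=1-2\beta_i(1-\beta_i)(1-\operatorname{Re}W_i^0)$.

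I expect the only real obstacle to be stating this equality condition correctly for $\beta_i>0$, and I will make the strengthened condition explicit rather than rely on the looser phrase. Everything else is routine algebra.
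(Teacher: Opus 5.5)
Your proposal is correct and is the natural direct computation. The paper states Proposition~\ref{prop:interference} without any proof, so the implied argument is expanding $W_i^0(\im\omega)=\sum_j w_{ij}e^{-\im\omega\tau_{ij}}$ and applying the triangle inequality with \eqref{eq:weights}, which is your route. Your sharper equality condition for $\beta_i\in(0,1)$, namely $W_i^0(\im\omega)=1$, is also right. It implies the paper's claim, which only asserts ``in phase'' as a necessary condition for $|W_i(\im\omega)|=1$.
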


Distributing supply across multiple routes improves stability margins without lowering $K$ by reducing phase lag at the gain crossover and attenuating magnitude at the phase crossover. Specifically, the cross terms in \eqref{prop:interference} pull $\vert{}W_i\vert{}$ below one near the $-180^\circ$ crossings (Fig.~\ref{fig:pareto_bode}a,c), which lowers the loop gain $\vert{}L\vert{}$ and increases the gain margin. Simultaneously, the effective phase lag scales with the weighted average of the delays (Fig.~\ref{fig:pareto_bode}b), preserving the phase margin even with the inclusion of slower routes. The term $W_i$ acts as the effective delay for a node ordering from multiple suppliers. At low target disk margins (higher responsiveness), the mix of edge weights maintains closed-loop stability at proportional gains that would cause a node with only a single-route to be unstable. At high target margins (higher robustness), this attenuation sustains a higher crossover frequency than a node relying solely on the fastest supplier.

\section{Supply Chain Network Design}
\label{sec:design}

The closed-loop response of a network depends on the edge weights, edge delays, and controller gains. This section illustrates how we can use these additional degrees of freedom to optimize the network.

We look to minimize the bullwhip metric of the entire network,
\begin{equation}
  \min_{(w,\delta,K) \in \mathcal{W}_b} \bigl\|T(w, \tau^0 - \delta, K)\bigr\|_\infty.
  \label{eq:p_joint}
\end{equation}
Each edge's nominal lead time $\tau^0_e$ can be shortened by a reduction $\delta_e$, giving an operated delay $\tau_e = \tau^0_e - \delta_e$. Stacking these reductions with the gains of all nodes $K = (K_i)_i$ and routing weights $w$, the feasible set is
\begin{equation}
  \squeeze{\mathcal{W}_b = \Bigl\{ (w,\delta,K)|
  \begin{aligned}[t]
  & \textstyle\sum_{j \in \Sc(i)} w_{ij} = 1, ~
    0 \le w_{ij} \le w^{\max}_{ij},\\
  & \delta_e \ge 0, ~ \tau^0_e - \delta_e \ge \tau_{\min}, ~
    \textstyle\sum_e \delta_e \le b,\\
  & \alpha_i(w, \tau^0 - \delta, K_i) = \alpha_{\min}
    \;\; \forall\, i \Bigr\}.
  \end{aligned}}
  \label{eq:feasible}
\end{equation}

The constraints in \eqref{eq:feasible} enforce flow allocations within capacity limits $w^{\max}_{ij}$, require lead-time reductions to be nonnegative, respect a minimum delay $\tau_{\min}$, and keep total spending within the budget $b$. The disk margin $\alpha_i$ decreases in $K_i$, so the margin binds at the
optimum. The equality in \eqref{eq:feasible} selects the largest gain each node
can run, and $K$ follows from $w$ and $\tau$ rather than being chosen.

We use a 14-city European supply network as an example throughout this section, which has five tiers: two source
ports, plants, regional and local distribution, and three demand nodes, the only nodes with nonzero exogenous demand, connected by 33 edges (Fig.~\ref{fig:joint_grad}). Inventory
travels at 150~km per unit time, setting nominal lead times from $1.5$
(Zurich--Milan) to $13.5$ (Granada--Prague). Each edge also carries a unit
freight cost $c : \Ec \to \mathbb{R}_{>0}$, higher for premium fast freight
than for slow freight, so $c_e$ decreases with transit time. Only three columns of $T$ are nonzero and the norms below are taken over them.

\begin{rem}[$\mathcal{H}_2$ design]
\label{rem:h2}We focus on  $\|T\|_\infty$ for brevity but we could replace $\|T\|_\infty$ with the band-limited $\mathcal{H}_2$ norm $  \|T\|_{2,\bar\omega}^2 = \frac{1}{\pi}\int_0^{\bar\omega}
  \|T(\im\omega)\|_F^2\,\mathrm{d}\omega$
which is the classical bullwhip measure restricted to the
frequencies below $\bar\omega$ to account for the direct feedthrough terms in \eqref{eq:policy_multi} that make $\|T\|_2$ infinite.
\end{rem}

\subsection{Gradients}
\label{sec:grad}

Solving \eqref{eq:p_joint} requires the derivatives of the objective in the
design variables $\theta = (w, \delta, K)$, and the supremum in
\eqref{eq:hinf_def} is not differentiable when more than one frequency attains
the peak. Writing $\bar\sigma_k = \bar\sigma(T(\im\omega_k))$ on a grid
$\Omega = \{\omega_k\}$ truncated at $\bar\omega = \pi$, above which
$\bar\sigma$ has settled to its high-frequency value
(Fig.~\ref{fig:worstnode}a), we replace the supremum by the
Kreisselmeier--Steinhauser smooth approximation to $\max_k \bar\sigma_k$
\citep{kreisselmeier1979systematic},
\begin{equation}
  \squeeze{J_\rho(\theta) = m + \frac{1}{\rho} \log \sum_{\omega_k \in \Omega}
  \exp\bigl(\rho(\bar\sigma_k - m)\bigr),
  \quad m = \max_{k} \bar\sigma_k .}
  \label{eq:ks}
\end{equation}
Here $\rho > 0$ is the smoothing parameter. By construction $J_\rho \to \|T\|_\infty$ as $\rho \to 0$ and as the grid $\Omega$ is
refined. Differentiating \eqref{eq:ks},
\begin{equation}
  \pd{J_\rho}{\theta} = \sum_{\omega_k \in \Omega} c_k\, \pd{\bar\sigma_k}{\theta},
  \qquad
  c_k = \frac{e^{\,\rho \bar\sigma_k}}{\sum_{k'} e^{\,\rho \bar\sigma_{k'}}},
  \label{eq:chain}
\end{equation}
with weights $c_k$ concentrating on the frequencies that attain the peak. Thus computing $\nabla_\theta \|T\|_\infty$ boils down to calculating $\partial \bar\sigma(T(j\omega)) / \partial\theta$ at an arbitrary $\omega > 0$.

Under the mild assumption that $\bar\sigma_k$ is simple, i.e., strictly larger than the second singular value of $T(\im\omega_k)$, the derivative $\partial \bar\sigma_k/\partial \theta$ is well-defined and depends on the network's response to the demand pattern that $T(\im\omega_k)$ amplifies the most.

\begin{thm}[Gradient of $J_\rho$]
\label{thm:grad}
Let A1--A3 hold and suppose $\bar\sigma_k$ is simple for every $k$. Then for every
edge $(i,j) \in \Ec$ and every ordering node $i$,
\begin{subequations}\label{eq:grads}
\begin{align}
  \pd{J_\rho}{w_{ij}} &= \sum_{\omega_k \in \Omega} \operatorname{Re}
     \bigl[ (1 - \beta_i)\, e^{-\im\omega_k \tau_{ij}}\, \Lambda_i^k
     + T_j(\im\omega_k)\, \psi_j^k\, o_i^k \bigr],
  \label{eq:gw}\\
  \pd{J_\rho}{\delta_{ij}} &= \sum_{\omega_k \in \Omega} \operatorname{Re}
     \bigl[ \im\omega_k (1 - \beta_i)\, w_{ij}\,
     e^{-\im\omega_k \tau_{ij}}\, \Lambda_i^k \bigr],
  \label{eq:gd}\\
  \pd{J_\rho}{K_i} &= \sum_{\omega_k \in \Omega} \operatorname{Re}
     \Bigl[ \frac{\im\omega_k \bigl(1 - W_i(\im\omega_k)\bigr)\, T_i(\im\omega_k)\, o_i^k}
                 {(\im\omega_k + K_i)^{2}}\, \psi_i^k \Bigr],
  \label{eq:gk}
\end{align}
\end{subequations}
where $o^k = \bar\sigma_k \eta_k$ with $\eta_k$ the unit left singular vector
of $T(\im\omega_k)$ associated with $\bar\sigma_k$,
\begin{equation} \label{eq:adjoint}
  \psi_i^k = c_k\, \overline{\eta_{k,i}}
  + \sum_{j \in \Sc(i)} w_{ij}\, T_j(\im\omega_k)\, \psi_j^k,
\end{equation}
where $\overline{(\cdot)}$ the complex conjugate, $T_j \psi_j^k = 0$ for j $\in \Sc_0$, and \begin{equation} \label{eq:lambda}
  \Lambda_i^k = -\,\frac{K_i\, T_i(\im\omega_k)\, o_i^k}{\im\omega_k + K_i}\, \psi_i^k .
\end{equation}
\end{thm}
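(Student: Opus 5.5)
The plan is to reduce everything to the derivative of a single simple singular value, then evaluate that derivative with a forward-mode perturbation of the network equations \eqref{eq:coupling} and $o_i = T_i d_i$, and finally transpose it into the adjoint recursion \eqref{eq:adjoint}.

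\textbf{Step 1: reduction to one frequency.} First, \eqref{eq:chain} follows by differentiating \eqref{eq:ks}. The shift $m$ drops out: its contribution $\partial m\,(1-\sum_k c_k)$ vanishes because the $c_k$ sum to one. This holds wherever the max is attained, and also in the generalized sense, since the expression for $J_\rho$ is independent of $m$. So it suffices to compute $\partial\bar\sigma_k/\partial\theta$ at a fixed $\omega_k$.

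\textbf{Step 2: derivative of a simple singular value.} Since $\bar\sigma_k$ is simple, the standard perturbation formula gives
\[
\mathrm{d}\bar\sigma_k=\operatorname{Re}\bigl(\eta_k^{H}\,\mathrm{d}T(\im\omega_k)\,\xi_k\bigr),
\]
where $\xi_k$ is the right singular vector. The key observation is to feed the fixed exogenous demand $v=\xi_k$ into the network. The resulting orders are $o^k = T\xi_k=\bar\sigma_k\eta_k$, and $\mathrm{d}T\,\xi_k$ is exactly the perturbation $\mathrm{d}o$ of the orders with $v$ held fixed. Hence
\[
\mathrm{d}\bar\sigma_k=\operatorname{Re}\Bigl(\textstyle\sum_i \overline{\eta_{k,i}}\,\mathrm{d}o_i\Bigr).
\]

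\textbf{Step 3: linearization and adjoint.} Next I linearize the network relations at $\omega_k$:
\[
\mathrm{d}o_i=\mathrm{d}T_i\,d_i+T_i\,\mathrm{d}d_i,\qquad
\mathrm{d}d_j=\textstyle\sum_{i\in\Cc(j)}\bigl(w_{ij}\,\mathrm{d}o_i+\mathrm{d}w_{ij}\,o_i\bigr).
\]
Write this as $(I-M)\,\mathrm{d}o=r$, where $M_{ji}=T_j w_{ij}$ and
\[
r_j=\mathrm{d}T_j\,d_j+T_j\textstyle\sum_{i}\mathrm{d}w_{ij}\,o_i .
\]
By A3, $M$ is strictly triangular, so $I-M$ is invertible by finite substitution; this is the structure behind Theorem~\ref{thm:path}. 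The quantity needed is $c_k\,\overline{\eta_k}^{T}(I-M)^{-1}r$. Define the row vector $\psi^{k}=c_k\overline{\eta_k}^{T}(I-M)^{-1}$, i.e.\ $\psi^k = c_k\overline{\eta_k}^{T} + \psi^k M$. Componentwise this reads
\[
\psi_i^k=c_k\overline{\eta_{k,i}}+\textstyle\sum_{j\in\Sc(i)}w_{ij}T_j\psi_j^k,
\]
which is \eqref{eq:adjoint}. Sources carry no $T_j$, so their terms vanish, and the recursion is solved backward from the sources in reverse topological order. Summing over $k$ then gives
\[
\pd{J_\rho}{\theta}=\sum_k\operatorname{Re}\sum_j\psi_j^k\,\pd{r_j}{\theta}.
\]

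\textbf{Step 4: reading off the nodal derivatives.} It remains to use Theorem~\ref{thm:node}. From $T_i=(s+K_i)/(s+K_iW_i)$ I compute $\partial T_i/\partial W_i=-K_iT_i/(s+K_iW_i)$. Using $d_i=o_i(s+K_iW_i)/(s+K_i)$, the factor $s+K_iW_i$ cancels and
\[
\mathrm{d}T_i\,d_i=-\frac{K_iT_io_i}{s+K_i}\,\mathrm{d}W_i .
\]
Multiplying by $\psi_i$ gives $\Lambda_i^k$ in \eqref{eq:lambda} times $\mathrm{d}W_i$. The partial derivatives of $W_i$ are
\[
\pd{W_i}{w_{ij}}=(1-\beta_i)e^{-s\tau_{ij}},\qquad
\pd{W_i}{\delta_{ij}}=s(1-\beta_i)w_{ij}e^{-s\tau_{ij}},
\]
the latter because $\tau_{ij}=\tau^0_{ij}-\delta_{ij}$. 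These give \eqref{eq:gd} and the first term of \eqref{eq:gw}. The second term of \eqref{eq:gw} comes directly from $T_j\,\mathrm{d}w_{ij}\,o_i$ in $r_j$, weighted by $\psi_j$. For $K_i$,
\[
\pd{T_i}{K_i}=\frac{s(1-W_i)}{(s+K_iW_i)^2};
\]
multiplying by $d_i$ and cancelling one factor $s+K_iW_i$ yields $s(1-W_i)T_io_i/(s+K_i)^2$, which is \eqref{eq:gk}.

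I expect the main obstacle to be the bookkeeping in Step 3: getting the adjoint to run supplier-to-customer with the right conjugation, and checking that the $\operatorname{Re}$ and $c_k$ factors land consistently. The partials are unconstrained: the simplex and margin constraints of \eqref{eq:feasible} are handled by the optimizer, not in this formula.
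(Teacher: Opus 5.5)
Your proposal is correct and follows essentially the paper's route. You use the simple-singular-value derivative with the right singular vector $\xi_k$ frozen as the exogenous demand, an adjoint recursion over the DAG, and the local derivatives of $T_i$ and $W_i$ with $d_i = o_i/T_i$ substituted. The only difference is presentational: you obtain \eqref{eq:adjoint} by transposing the forward sensitivity system $(I-M)\,\mathrm{d}o = r$, invertible by A3, where the paper writes the reverse-mode chain rule directly, and this makes the paper's somewhat informal $\partial J_\rho/\partial o_i$ bookkeeping precise.
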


\begin{pf}
See Appendix~\ref{app:adjoint}.
\end{pf}

The gradient only depends on the demand pattern that excites the network most. When $\overline{\sigma}_k$ is simple, the peak at $\omega_k$ responds to a design change only through the network's response to that pattern, $\xi_k$, which produces the order pattern $o_k=\overline{\sigma}_k\eta_k$; $\psi^k$ is the sensitivity of the peak to those orders. 

Evaluating $\|T\|_\infty$ or the smoothed approximation \eqref{eq:ks} requires $T(\im\omega_k)$ and its singular value decomposition at every grid frequency. Computing the gradient only requires an additional backward
pass \eqref{eq:adjoint} through the network. The complexity of each pass is linear in $|\Ec|$ and independent of the number of design variables, the marginal cost of computing the $2|\Ec| + |\Nc \setminus \Sc_0|$ derivatives is linear in $|\Ec|$. The disk margin $\alpha_i$ in \eqref{eq:feasible} is differentiated analogously. Whenever $\bar\sigma_k$ is repeated the derivative is not unique, and we use the value given by any one singular vector pair, as in \cite{apkarian2006nonsmooth} which corresponds to picking a specific element of the sub-differential.

Figure~\ref{fig:joint_grad} applies Theorem~\ref{thm:grad} to compute the weight gradients to the 14-city network at an even split of edge weights across every node's suppliers.

\begin{figure}[!t]
  \centering
  \includegraphics[width=0.97\linewidth]{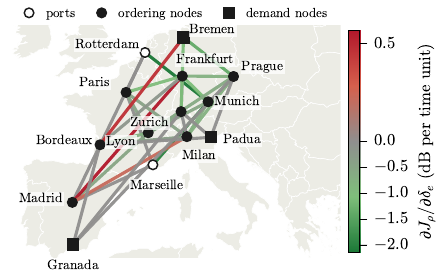}
  \caption{The 14-city network and the delay gradient \eqref{eq:gd} on it,
  evaluated at an even split of edge weights across every node's suppliers. Shortening Munich--Rotterdam lowers the peak
  the most, at $-2.1$~dB per time unit, while shortening Frankfurt--Madrid,
  one of the longest routes, raises it by $0.6$.}
  \label{fig:joint_grad}
\end{figure}

\subsection{Baselines and solution}

With expressions for the gradients in hand, we now move on to solving \eqref{eq:p_joint} for the 14-city network. We compare against two baselines that represent current practice. Neither baseline uses Theorem~\ref{thm:grad}, and both lie in
$\mathcal{W}_b$, so the three designs meet the same constraints and hold the
same disk margin.

The cost-minimal design is standard practice in supply chain network design
\citep{melo2009facility,snyder2005reliability}. It puts as much flow as
capacity allows on the cheapest edges,
\begin{equation}
  \squeeze{w^{\mathrm{cm}} \in \operatorname*{arg\,min}_{w}
  \Bigl\{ \textstyle\sum_e c_e\, w_e \;:\; \eqref{eq:weights},\
  w \le w^{\max} \Bigr\},
  \quad \delta^{\mathrm{cm}} = 0,}
  \label{eq:baseline_cm}
\end{equation}
and spends no budget on shortening.

The naive design is what a practitioner who knows transport delays impact the network dynamics, but has no
gradient, would build. It puts as much weight as possible on the fastest edges and spends the whole
budget on shortening the longest delays,
\begin{equation}
\begin{gathered}
  \squeeze{w^{\mathrm{n}} \in \operatorname*{arg\,min}_{w}
  \Bigl\{ \textstyle\sum_e \tau^0_e\, w_e \;:\; \eqref{eq:weights},\
  w \le w^{\max} \Bigr\},}\\
  \delta^{\mathrm{n}} \in \operatorname*{arg\,min}_{\delta \in \Delta_b}\
  \max_{e} \bigl(\tau^0_e - \delta_e\bigr),\\
  \Delta_b = \bigl\{ \delta \ge 0 :\ \tau^0 - \delta \ge \tau_{\min},\
  \textstyle\sum_e \delta_e \le b \bigr\} .
\end{gathered}
  \label{eq:baseline_naive}
\end{equation}

In all cases the capacity caps are $w^{\max} = 0.8$, the delay floor is $\tau_{\min} = 0.5$,
the disk margin is $\alpha_{\min} = 0.4$, and the capital budget is $b = 6$. We
solve \eqref{eq:p_joint} by sequential quadratic programming with the gradients
of Theorem~\ref{thm:grad}, from random initial splits of every node's orders,
over a uniform grid of $160$ frequencies on $[0, \pi]$. The peak is nonsmooth
in the design variables, so we optimize the smooth upper bound
$J_\rho \ge \max_k \bar\sigma_k$ of \eqref{eq:ks} and use a multistart to
handle local minima. Every reported peak is then re-evaluated on a grid of
$60\,000$ frequencies: no design moves by more than $0.07$~dB and their
ranking is unchanged (supplementary material), so neither the grid nor the
smoothing affects the comparison. Each solve of \eqref{eq:p_joint} from one start takes under $15$~s on a
single core of an Apple M3 and the full budget sweep with four starts per budget under ten
minutes.

\subsection{Results}

The results are illustrated in Figs.~\ref{fig:budget}--\ref{fig:worstnode}.

\begin{figure}[!t]
  \centering
  \includegraphics[width=0.8\linewidth]{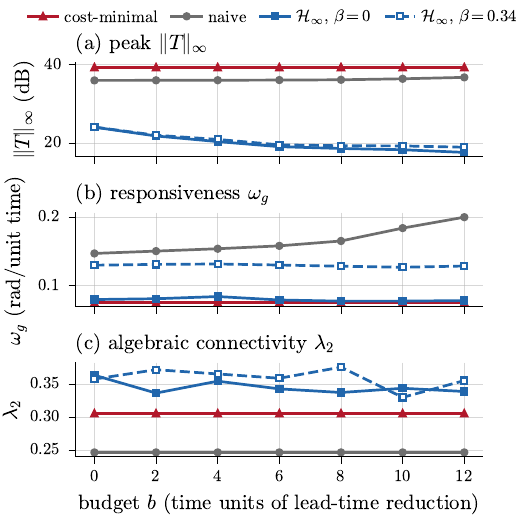}
  \caption{Performance metrics vs. lead-time budget $b$ at
  $\alpha_{\min} = 0.4$. Changes to routing alone ($b = 0$) closes $15$ of the $20$~dB between cost-minimal and
$\mathcal{H}_\infty$; the naive design converts its budget into
responsiveness rather than attenuation; and $\lambda_2$ of the
$\mathcal{H}_\infty$ design exceeds both baselines at every budget.}
  \label{fig:budget}
\end{figure}
The $\mathcal{H}_\infty$ design amplifies far less than either baseline at
every budget (Fig.~\ref{fig:budget}a). Routing changes, which are cheap
operational changes, account for most of the reduction: the
$\mathcal{H}_\infty$ design with no budget already removes $15.3$ of the
$20.3$~dB gained at $b = 6$. The $\mathcal{H}_\infty$ network leaves most
routes below their capacity cap, uses more of them per node than either
baseline, and mixes lead times at each node (Fig.~\ref{fig:joint_designs}).
It does this without giving up responsiveness: its $\omega_g$ stays at the
cost-minimal value at every budget (Fig.~\ref{fig:budget}b). Crediting the
supply line changes little: at $\beta = 0.34$ every node tolerates roughly
twice the gain at $\alpha_{\min}$, the $\mathcal{H}_\infty$ peak moves by at
most $1.3$~dB over the sweep and $0.5$~dB at $b = 6$, and the design still
improves on the cost-minimal one by $22.4$~dB.

\begin{figure}[!t]
  \centering
  \includegraphics[width=0.95\linewidth]{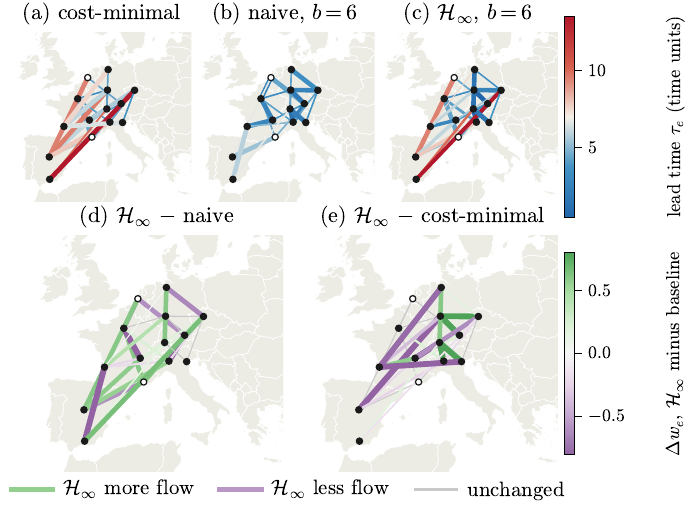}
  \caption{(a)--(c) the optimized networks, line width indicates flow weight $w_e$ and colour indicates lead time $\tau_e$; the naive and $\mathcal{H}_\infty$
  designs have spent the budget $b=6$. (d)--(e) the difference between the edge weights of the $\mathcal{H}_\infty$ design against each baseline. Both baselines put 12 of the 33 edges at the capacity cap and use two routes per node; the $\mathcal{H}_\infty$ design puts 7 edges at the cap and uses $2.2$, with $46\%$ of its flow on routes longer than the median against $80\%$ for cost-minimal and $28\%$ for
  naive.}
  \label{fig:joint_designs}
\end{figure}
Spending on lead times without the gradient buys responsiveness but no
improvement in amplification. The naive design spends its whole budget on shortening long
edges, which raises its $\omega_g$ from $0.15$ to $0.20$ as budget increases while its
$\|T\|_\infty$ does not fall (Fig.~\ref{fig:budget}a,b), because shortening
the longest routes results in near uniform delays across the network, and by
\eqref{eq:interference} the attenuation of $|W_i|$ comes from the differences
$\tau_{ij} - \tau_{ik}$. Removing them drives $|W_i|$ toward one, which raises
the loop gain.

Fig.~\ref{fig:worstnode} shows the same comparison in time. After a unit step
in demand, the inventory of the most amplified supplier oscillates for the
length of the horizon under the cost-minimal design and swings several fold
under the naive design, while the $\mathcal{H}_\infty$ design holds its
inventory near setpoint, though it settles more slowly than the naive
design; its flat response in Fig.~\ref{fig:worstnode}a is the peak being
levelled across frequencies, not over-damping.

The algebraic connectivity of the routing is the second-smallest eigenvalue
\begin{equation}
  \lambda_2(w) = \lambda_2\bigl(\mathrm{diag}(A\mathbf{1}) - A\bigr),
  \qquad A_{ij} = A_{ji} = w_{ij},
  \label{eq:lam2}
\end{equation}
of the Laplacian of the undirected routing graph $(\Nc,\Ec,w)$. It measures how well the
network stays connected when an edge is removed: it is zero when the graph
disconnects, and it grows as flow spreads over more
routes \citep{ghosh2006growing, summers2015topology}. It is larger for the
$\mathcal{H}_\infty$ design than for either baseline at every budget
(Fig.~\ref{fig:budget}c). The induced norm is largest when flow concentrates on a few routes, so minimizing it spreads flow over more of them. Designing against amplification
therefore improves resilience to lost routes as well, with no connectivity
constraint in \eqref{eq:p_joint}.

\begin{figure}[!b]
  \centering
  \includegraphics[width=0.86\linewidth]{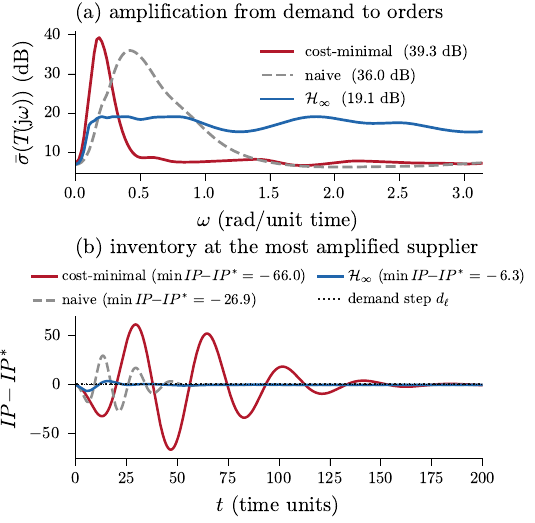}
  \caption{(a) $\bar\sigma(T(\im\omega))$ for the three designs. (b) Inventory
position at the most amplified supplier, after a unit step in
demand (dotted) at the worst case demand node: Bremen to Madrid for cost-minimal, Bremen to Munich for naive, Grenada to Munich for $\mathcal{H}_\infty$.}
  \label{fig:worstnode}
\end{figure}

\section{Conclusions}
\label{sec:conclusions}

This papers models a supply network using a DAG and proves that the transfer functions between demand and orders can be decomposed as a sum of paths through the networks. We illustrate that network amplification (the bullwhip effect) can be effectively mitigated by reallocating flows within the network and provide a systematic methodology for doing so by minimizing the $\mathcal{H}_\infty$ norm of the network, illustrated using a 14 city example. Future work will extend these results to system with more general ordering policies, stochastic lead times, capacity limits and eventually to networks with cycles.

\appendix
\section{Proof of Theorem~\ref{thm:path}}
\label{app:path}
The proof is by induction on the topological index of A3. Fix
$\ell \in \Nc \setminus \Sc_0$ and set $v_i = 0$ for all $i \neq \ell$. If $j = \ell$
then $\Ppaths{\ell}{\ell} = \{(\ell)\}$ and
\begin{equation}
  T_{\ell \to \ell} = o_\ell/v_\ell = T_\ell
  = T_\ell\,\Tp{(\ell)}.
  \label{eq:path_base}
\end{equation}
Fix $j > \ell$ and suppose \eqref{eq:path_sum} holds for every
$\ell \leq k < j$. By \eqref{eq:Ti}, \eqref{eq:coupling}, and the
inductive hypothesis,
\begin{subequations}
\begin{align}
o_j = T_j d_j &= T_j \Bigl(\sum_{k \in \Cc(j)} w_{kj}\, o_k
        + v_j\Bigr)\\
        & = T_j \sum_{k \in \Cc(j)} w_{kj}\,
          T_{\ell\to k}\, v_\ell\\
       &= T_j \sum_{k \in \Cc(j)} \sum_{q \in \Ppaths{\ell}{k}}
       \Tp{(q,j)}\, v_\ell,\label{eq:prf1}
\end{align}
\end{subequations}
using $w_{kj}\, T_k\, \Tp{q} = \Tp{(q,j)}$. The double sum contains one term per path because appending $j$ to each path $q \in \Ppaths{\ell}{k}$ for all customers $k \in \Cc(j)$ constructs the complete set of paths $\Ppaths{\ell}{j}$, shown as the union
\begin{equation}
  \Ppaths{\ell}{j} = \cup_{k \in \Cc(j)}
  \cup_{q \in \Ppaths{\ell}{k}} \{(q, j)\},
  \label{eq:path_union}
\end{equation}
so the double sum in \eqref{eq:prf1} evaluates to
\begin{equation}
 \frac{o_j}{v_\ell} = T_{\ell\to j} = T_j \sum_{p \in \Ppaths{\ell}{j}}
       \Tp{p},
\end{equation}
completing the proof. \qed

\section{Proof of Theorem~\ref{thm:grad}}
\label{app:adjoint}

Equation \eqref{eq:chain} leaves the derivatives
$\partial\bar\sigma_k/\partial\theta$ to be computed. Where $\bar\sigma_k$ is
simple it is analytic in $\theta$, with
\citep{apkarian2006nonsmooth}
\begin{equation}
  \pd{\bar\sigma_k}{\theta} = \operatorname{Re}\Bigl[ \eta_k^{\mathsf H}\,
  \pd{T(\im\omega_k)}{\theta}\, \xi_k \Bigr].
  \label{eq:sigmader}
\end{equation}
Since $\xi_k$ does not depend on $\theta$, the derivative in \eqref{eq:sigmader}
appears only as $(\partial T/\partial\theta)\, \xi_k = \partial o/\partial\theta$,
the derivative of the orders $T(\im\omega_k)\, \xi_k$, so with \eqref{eq:chain}
\begin{equation}
  \pd{J_\rho}{\theta}
  = \sum_{\omega_k \in \Omega} \operatorname{Re}\Bigl[ \sum_{i}
    c_k\, \overline{\eta_{k,i}}\ \pd{o_i}{\theta} \Bigr],
  \label{eq:start}
\end{equation}
where $o = o^k$ is the order vector of Theorem~\ref{thm:grad}. Everything below is at one grid frequency $\omega_k$, evaluated at $\im\omega_k$ with the superscript $k$ of Theorem~\ref{thm:grad} dropped, and the sum over $k$ in \eqref{eq:start}
is the sum over frequencies in \eqref{eq:gw}--\eqref{eq:gk}. Every relation used
is complex-linear, so the real part is taken only at the end.

At node $i$, \eqref{eq:coupling} and \eqref{eq:Ti} give
\begin{equation}
\begin{gathered}
  \pd{d_j}{o_i} = w_{ij} \ \ (j \in \Sc(i)),
  \qquad
  \pd{o_i}{d_i} = T_i,\\
  \pd{d_j}{w_{ij}} = o_i .
\end{gathered}
  \label{eq:local}
\end{equation}
$o_i$ appears in \eqref{eq:start} with the factor $c_k \overline{\eta_{k,i}}$ and,
by the first of \eqref{eq:local}, in the demand of every supplier
$j \in \Sc(i)$; $d_i$ appears only in $o_i$. The chain rule therefore gives
\begin{equation}
  \pd{J_\rho}{o_i} = c_k\, \overline{\eta_{k,i}}
    + \sum_{j \in \Sc(i)} w_{ij}\, \pd{J_\rho}{d_j},
  \qquad
  \pd{J_\rho}{d_i} = T_i\, \pd{J_\rho}{o_i},
  \label{eq:chainnode}
\end{equation}
Writing $\psi_i = \partial J_\rho/\partial o_i$, the second relation gives $\partial J_\rho/\partial d_i = T_i\psi_i$, and substituting it into the first gives \eqref{eq:adjoint}; a source $j \in \Sc_0$ does not order, so $\partial J_\rho/\partial d_j = 0$ there. Each design variable changes the
cost through $o_i$, and $w_{ij}$ changes it a second time through $d_j$, which
contains the term $w_{ij} o_i$, so a second application of the chain rule
gives
\begin{equation}
\begin{aligned}
  \pd{J_\rho}{w_{ij}} &= \psi_{i}\, \pd{o_i}{w_{ij}} + T_j\,\psi_j\, o_i,\\
  \pd{J_\rho}{\tau_{ij}} &= \psi_{i}\, \pd{o_i}{\tau_{ij}},
  \qquad
  \pd{J_\rho}{K_i} = \psi_{i}\, \pd{o_i}{K_i} .
\end{aligned}
  \label{eq:assembled}
\end{equation}
Since $o_i = T_i d_i$, differentiating \eqref{eq:W_i} and \eqref{eq:Ti} gives
\begin{equation}
\begin{aligned}
  \pd{T_i}{W_i} &= -\frac{K_i T_i^{2}}{\im\omega_k + K_i},
  \qquad
  \pd{T_i}{K_i} = \frac{\im\omega_k (1 - W_i) T_i^{2}}
                           {(\im\omega_k + K_i)^{2}},\\
  \pd{W_i}{w_{ij}} &= (1 - \beta_i)\, e^{-\im\omega_k \tau_{ij}},\\
  \pd{W_i}{\tau_{ij}} &= -\im\omega_k (1 - \beta_i)\, w_{ij}\,
                          e^{-\im\omega_k \tau_{ij}} .
\end{aligned}
  \label{eq:elem}
\end{equation}
Substituting \eqref{eq:elem} into \eqref{eq:assembled} and using $o_i = T_i d_i$
to remove $d_i$, the contribution at $\omega_k$ is $\Lambda_i$ of
\eqref{eq:lambda} times the weight and delay derivatives of $W_i$, together with
the routing term $T_j \psi_j o_i$, which on taking real parts and summing over
$k$ gives \eqref{eq:gw}--\eqref{eq:gk}, the sign of the delay derivative
reversing with $\delta_{ij} = \tau^0_{ij} - \tau_{ij}$.

The derivative of the disk margin in the constraint \eqref{eq:feasible} follows
the same steps. With $\omega_i^\star$ the frequency attaining the supremum in
\eqref{eq:disk}, and everything evaluated at $s = \im\omega_i^\star$,
\begin{equation}
\begin{gathered}
  \pd{\alpha_i^{-1}}{\theta} = \operatorname{Re}
  \Bigl[ \frac{\overline{S_i - \tfrac12}}{|S_i - \tfrac12|}\,
  \pd{S_i}{\theta} \Bigr],
  \qquad
  \pd{S_i}{\theta} = -S_i^{2}\, \pd{L_i}{\theta},\\
  \pd{L_i}{w_{ij}} = \frac{K_i(1-\beta_i)}{s}\, e^{-s\tau_{ij}},
  \qquad
  \pd{L_i}{K_i} = \frac{W_i}{s},\\
  \pd{L_i}{\tau_{ij}} = -K_i(1-\beta_i)\, w_{ij}\, e^{-s\tau_{ij}} ,
\end{gathered}
  \label{eq:alphader}
\end{equation}
which involve only node $i$'s own weights, delays and gain. \qed

\bibliographystyle{ieeetr}
\bibliography{references}

@article{lee1997distortion,
  title={Information distortion in a supply chain: The bullwhip effect},
  author={Lee, Hau L and Padmanabhan, Venkata and Whang, Seungjin},
  journal={Management science},
  volume={43},
  number={4},
  pages={546--558},
  year={1997},
  publisher={Informs}
}

@article{wang2016progress,
  title={The bullwhip effect: Progress, trends and directions},
  author={Wang, Xun and Disney, Stephen M},
  journal={European Journal of Operational Research},
  volume={250},
  number={3},
  pages={691--701},
  year={2016},
  publisher={Elsevier}
}

@article{sterman1989beergame,
  title={Modeling managerial behavior: Misperceptions of feedback in a dynamic decision making experiment},
  author={Sterman, John D},
  journal={Management science},
  volume={35},
  number={3},
  pages={321--339},
  year={1989},
  publisher={INFORMS}
}

@article{metters1997quantifying,
  title={Quantifying the bullwhip effect in supply chains},
  author={Metters, Richard},
  journal={Journal of operations management},
  volume={15},
  number={2},
  pages={89--100},
  year={1997},
  publisher={Wiley Online Library}
}

@article{dejonckheere2003control,
  title={Measuring and avoiding the bullwhip effect: A control theoretic approach},
  author={Dejonckheere, Jeroen and Disney, Stephen M and Lambrecht, Marc R and Towill, Denis R},
  journal={European journal of operational research},
  volume={147},
  number={3},
  pages={567--590},
  year={2003},
  publisher={Elsevier}
}

@article{gaalman2006fullstate,
  title={Bullwhip reduction for ARMA demand: The proportional order-up-to policy versus the full-state-feedback policy},
  author={Gaalman, Gerard},
  journal={Automatica},
  volume={42},
  number={8},
  pages={1283--1290},
  year={2006},
  publisher={Elsevier}
}

@inproceedings{li2024mitigating,
  title={Mitigating transient bullwhip effects under imperfect demand forecasts},
  author={Li, Sarah HQ and D{\"o}rfler, Florian},
  booktitle={2024 IEEE 63rd Conference on Decision and Control (CDC)},
  pages={529--534},
  year={2024},
  organization={IEEE}
}

@article{melo2009facility,
  title={Facility location and supply chain management--A review},
  author={Melo, M Teresa and Nickel, Stefan and Saldanha-Da-Gama, Francisco},
  journal={European journal of operational research},
  volume={196},
  number={2},
  pages={401--412},
  year={2009},
  publisher={Elsevier}
}

@article{snyder2005reliability,
  title={Reliability models for facility location: the expected failure cost case},
  author={Snyder, Lawrence V and Daskin, Mark S},
  journal={Transportation science},
  volume={39},
  number={3},
  pages={400--416},
  year={2005},
  publisher={INFORMS}
}

@article{perera2017network,
  title={Network science approach to modelling the topology and robustness of supply chain networks: a review and perspective},
  author={Perera, Supun and Bell, Michael GH and Bliemer, Michiel CJ},
  journal={Applied network science},
  volume={2},
  number={1},
  pages={33},
  year={2017},
  publisher={Springer}
}

@inproceedings{ghosh2006growing,
  title={Growing well-connected graphs},
  author={Ghosh, Arpita and Boyd, Stephen},
  booktitle={Proceedings of the 45th IEEE Conference on Decision and Control},
  pages={6605--6611},
  year={2006},
  organization={IEEE}
}

@article{darmawan2025scnd,
  title={Supply chain network design with the presence of the bullwhip effect},
  author={Darmawan, Agus and Wong, Hartanto Wijaya and Fransoo, Jan},
  journal={International journal of production economics},
  volume={286},
  pages={109668},
  year={2025},
  publisher={Elsevier}
}

@article{farhat2021hinf,
  title={{$H_\infty$} network optimization for edge consensus},
  author={Farhat, Omar and Abou Jaoude, Dany and de Badyn, Mathias Hudoba},
  journal={European Journal of Control},
  volume={62},
  pages={2--13},
  year={2021},
  publisher={Elsevier}
}

@article{ghaedsharaf2019performance,
  title={Performance improvement in noisy linear consensus networks with time-delay},
  author={Ghaedsharaf, Yaser and Siami, Milad and Somarakis, Christoforos and Motee, Nader},
  journal={IEEE Transactions on automatic control},
  volume={64},
  number={6},
  pages={2457--2472},
  year={2018},
  publisher={IEEE}
}

@article{welikala2025consensus,
  title={Inventory consensus control in supply chain networks using dissipativity-based control and topology co-design},
  author={Welikala, Shirantha and Lin, Hai and Antsaklis, Panos J},
  journal={arXiv preprint arXiv:2502.06580},
  year={2025}
}

@article{ouyang2010bullwhip,
  title={The bullwhip effect in supply chain networks},
  author={Ouyang, Yanfeng and Li, Xiaopeng},
  journal={European Journal of Operational Research},
  volume={201},
  number={3},
  pages={799--810},
  year={2010},
  publisher={Elsevier}
}

@article{helbing2004physics,
  title={Physics, stability, and dynamics of supply networks},
  author={Helbing, Dirk and L{\"a}mmer, Stefan and Seidel, Thomas and {\v{S}}eba, P{\'e}tr and P{\l}atkowski, Tadeusz},
  journal={Physical Review E},
  volume={70},
  number={6},
  pages={066116},
  year={2004},
  publisher={APS}
}

@article{sipahi2009stability,
  title={On stability problems of supply networks constrained with transport delay},
  author={Sipahi, Rifat and L{\"a}mmer, Stefan and Helbing, Dirk and Niculescu, Silviu-Iulian},
  journal={ASME Journal of Dynamic Systems, Measurement, and Control},
  volume={131}, number={2}, pages={021005},
  year={2009}
}

@article{hernandezsantos2026stability,
  title={Stability Limits of Coordinated Supply Chains Under Transportation Delays: Implications for Resilient Logistics Design},
  author={Hernandez-Santos, Carlos and Martinez-Malacara, Gloria A and de la Cruz, Nain and Reynoso-Guajardo, Luis Alejandro and Hernandez-Vega, Jose Isidro and Gallardo-Morales, Mario Carlos and Macias-Tobias, Francisco Fabian and Hernandez, Amadeo and Garcia-Andrade, Roxana},
  journal={Systems},
  volume={14},
  number={7},
  pages={752},
  year={2026},
  publisher={MDPI}
}

@article{seiler2020introduction,
  title={An introduction to disk margins [lecture notes]},
  author={Seiler, Peter and Packard, Andrew and Gahinet, Pascal},
  journal={IEEE Control Systems Magazine},
  volume={40},
  number={5},
  pages={78--95},
  year={2020},
  publisher={IEEE}
}

@incollection{kreisselmeier1979systematic,
  title={Systematic control design by optimizing a vector performance index},
  author={Kreisselmeier, Gerhard and Steinhauser, Reinhold},
  booktitle={Computer aided design of control systems},
  pages={113--117},
  year={1980},
  publisher={Elsevier}
}

@article{apkarian2006nonsmooth,
  title={Nonsmooth $H_\infty$ synthesis},
  author={Apkarian, Pierre and Noll, Dominikus},
  journal={IEEE Transactions on Automatic Control},
  volume={51},
  number={1},
  pages={71--86},
  year={2006},
  publisher={IEEE}
}

@inproceedings{summers2015topology,
  title={Topology design for optimal network coherence},
  author={Summers, Tyler and Shames, Iman and Lygeros, John and D{\"o}rfler, Florian},
  booktitle={2015 European Control Conference (ECC)},
  pages={575--580},
  year={2015},
  organization={IEEE}
}

@article{hosoda2006altruistic,
  title={The governing dynamics of supply chains: The impact of altruistic behaviour},
  author={Hosoda, Takamichi and Disney, Stephen M},
  journal={Automatica},
  volume={42},
  number={8},
  pages={1301--1309},
  year={2006},
  publisher={Elsevier}
}

@article{qiu2015stabilization,
  title={Stabilization of supply networks with transportation delay and switching topology},
  author={Qiu, Xiang and Yu, Li and Zhang, Dan},
  journal={Neurocomputing},
  volume={155},
  pages={247--252},
  year={2015},
  publisher={Elsevier}
}

@article{fu2020cooperative,
  title={A cooperative distributed model predictive control approach to supply chain management},
  author={Fu, Dongfei and Zhang, Hai-Tao and Dutta, Abhishek and Chen, Guanrong},
  journal={IEEE Transactions on Systems, Man, and Cybernetics: Systems},
  volume={50},
  number={12},
  pages={4894--4904},
  year={2019},
  publisher={IEEE}
}

@article{dominguez2015structure,
  title={The impact of the supply chain structure on bullwhip effect},
  author={Dominguez, Roberto and Cannella, Salvatore and Framinan, Jose M},
  journal={Applied Mathematical Modelling},
  volume={39},
  number={23-24},
  pages={7309--7325},
  year={2015},
  publisher={Elsevier}
}

@article{disney2002discrete,
  title={A discrete transfer function model to determine the dynamic stability of a vendor managed inventory supply chain},
  author={Disney, Stephen Michael and Towill, Denis Royston},
  journal={International Journal of Production Research},
  volume={40},
  number={1},
  pages={179--204},
  year={2002},
  publisher={Taylor \& Francis}
}

@book{zhou1996robust,
  title={Robust and optimal control},
  author={Zhou, Kemin and Doyle, John Comstock and Glover, Keith and others},
  volume={40},
  year={1996},
  publisher={Prentice hall New Jersey}
}

@article{sarimveis2008dynamic,
  title={Dynamic modeling and control of supply chain systems: A review},
  author={Sarimveis, Haralambos and Patrinos, Panagiotis and Tarantilis, Chris D and Kiranoudis, Chris T},
  journal={Computers \& operations research},
  volume={35},
  number={11},
  pages={3530--3561},
  year={2008},
  publisher={Elsevier}
}

@article{ouyang2006characterization,
  title={Characterization of the bullwhip effect in linear, time-invariant supply chains: some formulae and tests},
  author={Ouyang, Yanfeng and Daganzo, Carlos},
  journal={Management science},
  volume={52},
  number={10},
  pages={1544--1556},
  year={2006},
  publisher={INFORMS}
}

@article{chen2000quantifying,
  title={Quantifying the bullwhip effect in a simple supply chain: The impact of forecasting, lead times, and information},
  author={Chen, Frank and Drezner, Zvi and Ryan, Jennifer K and Simchi-Levi, David},
  journal={Management science},
  volume={46},
  number={3},
  pages={436--443},
  year={2000},
  publisher={INFORMS}
}

@inproceedings{ignaciuk2020quantifying,
  title={Quantifying the bullwhip effect in networked structures with nontrivial topologies},
  author={Ignaciuk, Przemys{\l}aw and Dziomdziora, Adam},
  booktitle={Proceedings of the 2020 International Conference on Big Data in Management},
  pages={62--66},
  year={2020}
}

@article{zemzam2019stabilization,
  title={Stabilization of two-echelon supply networks with uncertain demand, multiple delays and switching topology using robust control},
  author={Zemzam, Azeddine and El Alami, Jamila and El Alami, Nourddine},
  journal={International Journal of Dynamics and Control},
  volume={7},
  number={1},
  pages={388--404},
  year={2019},
  publisher={Springer}
}

@article{wei2015exploring,
  title={Exploring the impact of network structure and demand collaboration on the dynamics of a supply chain network using a robust control approach},
  author={Wei, Yongchang and Wang, Hongwei and Chen, Fangyu},
  journal={Mathematical Problems in Engineering},
  volume={2015},
  number={1},
  pages={102727},
  year={2015},
  publisher={Wiley Online Library}
}

\end{document}